\newtheorem{prop}{Proposition}
\newtheorem{thm}{Theorem}
\newtheorem{corol}{Corollary}
\newtheorem{definition}{Definition}
\newtheorem{approximation}{Approximation}
\def\E{\mathbb{E}}
\def\F{\mathcal{F}}
\def\N{\mathbb{N}}
\def\Np{\mathcal{N}}
\def\P{\mathbb{P}}
\def\Po{\mathcal{P}}
\def\R{\mathbb{R}}
\def\cal{\mathcal}
\newcommand\ind[1]{\mathbbm{1}_{\left\{#1\right\}}}
\def\Var{\mathbb{V}\mathrm{ar}}
\def\etal{{\em et al.}}
\def\One{1}
\def\Two{2}
\def\Three{3}
\def\Four{4}
\title{A Queueing System for Modeling a File Sharing Principle}
\author{
\alignauthor
Florian Simatos and Philippe Robert \\ 
       \affaddr{INRIA, RAP project}\\
       \affaddr{Domaine de Voluceau, Rocquencourt}\\
       \affaddr{78153 Le Chesnay, France}\\
       \email{ \{Florian.Simatos,Philippe.Robert\}@inria.fr}
\alignauthor Fabrice Guillemin 
\\
       \affaddr{Orange Labs}\\
       \affaddr{2, Avenue Pierre Marzin}\\
       \affaddr{22300 Lannion}\\
       \email{Fabrice.Guillemin@orange-ftgroup.com}
}
\begin{document}

\conferenceinfo{SIGMETRICS'08,} {June 2--6, 2008, Annapolis, Maryland, USA.} 

\CopyrightYear{2008}

\crdata{978-1-60558-005-0/08/06} 

\maketitle
\begin{abstract}
We investigate in this paper the performance  of a simple file sharing principle. For this
purpose, we consider a system composed  of $N$ peers becoming active at exponential random
times; the  system is initiated  with only  one server offering  the desired file  and the
other peers after becoming active try to download it. Once the file has been downloaded by
a peer, this  one immediately becomes a server.  To investigate  the transient behavior of
this file  sharing system, we study  the instant when  the system shifts from  a congested
state where  all servers available are  saturated by incoming  demands to a state  where a
growing number  of servers are idle.  In  spite of its apparent  simplicity, this queueing
model (with  a random number of  servers) turns out to  be quite difficult  to analyze.  A
formulation  in terms  of an  urn and  ball model  is proposed  and  corresponding scaling
results are derived. These asymptotic results are then compared against simulations.
\end{abstract}

\category{C.4}{Computer Systems Organization}{Performance of Systems}[modeling techniques, performance attributes]

\terms{Queueing Systems,  Transient Analysis of Markov Processes, File Sharing,  Peer to Peer}


\section{Introduction}

This paper  analyzes the  performance of a  simple file  sharing principle during  a flash
crowd scenario when a popular content  becomes available on a peer-to-peer network.  It is
supposed that a given peer is willing to share a given file with a community of $N$ peers,
which are initially asleep.   An asleep peer becomes active at some  random time, i.e., it
tries  to download  the  file from  a  peer having  the  complete file.  Once  a peer  has
downloaded the file, it immediately becomes  a server from which another peer can download
the file. To simplify the model, we assume that the file is in one piece and not segmented
into chunks; the time needed to download the file from one server is supposed to be random
in order to take into account the diversity of upload capacities of peers.

The goal  of this paper is  to understand how the  network builds up in  this situation as
peers join  the system. In particular,  we are interested  in analyzing the growth  of the
number of available servers in the system.  Note that there are eventually as many servers
as peers since each of them can complete the file download.

In spite  of its  apparent simplicity, the  analysis of  the system is  quite difficult
because we have to  cope with a network comprising a random  number of servers: When peers
complete  their download,  they  become  new servers  so  that the  number  of servers  is
continually increasing.   It is assumed  that an incoming  peer chooses a server  with the
smallest number of queued peers.  Other routing policies are considered at the end of this
paper.

The analysis performed  in this paper substantially differs  from earlier studies appeared
so far in the  technical literature in the sense that we  consider the transient formation
of a  network of peers. Yang  and de Veciana~\cite{Yang06:0} considered  a similar setting
which  they  analyzed  with  results  related  to  branching  processes  to  describe  the
exponential growth of  the number of servers. Our  goal in this paper is  precisely to obtain
more  detailed asymptotics of  this transient  regime.  Except  the paper  by Yang  and de
Veciana~\cite{Yang06:0},  most  of the  papers  published so  far  on  the performance  of
peer-to-peer systems assume that  peers join and leave the system and  that a steady state
regime exists.  The problem is then to  evaluate the impact of some parameters of the file
sharing protocol on the  equilibrium of the system.  Different techniques  can be used to
perform such an analysis, for instance by using a Markovian chain to describe the state of
the system, possibly by using approximation techniques when the state space related to the
number of  peers in the  system is  too large.  See  Ge \etal~\cite{Kurose}. A  fluid flow
analysis  with   an  underlying  Markovian   structure  is  proposed  in   Cl\'evenot  and
Nain~\cite{Nain} in order  to model the Squirrel peer-to-peer caching  system.  In Qiu and
Srikant~\cite{Srikant}, the authors directly use a fluid approximation to study the steady
state of  a peer to peer network,  subsequently complemented by diffusion  variations around
the steady state solutions.  In Massouli\'e and Vojnovi\'{c}~\cite{Massoulie}, the authors study
the  performance  of  a file  sharing  system   via a  stochastic  coupon
replication formulation, a  coupon corresponding to a  chunk of a file.  The  goal of this
study is to understand  the impact of the policy applied by  users for choosing coupons on
the  performance of  the  system. The  system  is studied  in equilibrium  as  in Qiu  and
Srikant~\cite{Srikant}.

The rest of this paper is organized as follows: In Section~\ref{sec:model}, we describe
the system under consideration and some heuristics to study the system are presented. It
turns out that the dynamics of the system can be decomposed in two regimes. In the first
one, there are almost no empty servers and we establish an analogy with a random urn and
ball problem on the real line. By approximating the probability of selecting an urn by its
mean value, we analyze in Section~\ref{urnball} the corresponding deterministic urn and
ball problem. The analysis for the random urn and ball problem is much more complicated to
analyze. The complete analysis is done in \cite{mathpaper} and only the main results are
summarized in Section~\ref{sec:mathr}. In Section~\ref{sec:discussion}, we support  via
simulation the different approximations and heuristics made in this paper to analyze the
file sharing system. Concluding remarks are presented in Section~\ref{conclusion}.

\section{Model description}
\label{sec:model}

\subsection{Problem formulation}

We consider throughout this paper a system composed of $N$ peers interested in downloading
a given file. At the beginning, only one  peer (the initial server) has the file and other
peers are  asleep. When  becoming active, after  an exponentially distributed  duration of
time with parameter $\rho$, a peer tries to  download the file from the server that is the
less loaded in terms of number of  queued peers. In particular, the first peer becoming active
downloads the  file from  the initial  server.  The time  needed to  download the  file is
assumed to be exponentially distributed with mean $1$. 

\paragraph{Exponential distributions}
The hypothesis on the distribution on the duration  of the time for a peer to be active is
quite reasonable: this  is a classical situation when a large  number of {\em independent}
users may access some  network. The assumption on the duration of  the time to download is
not realistic in practice since this quantity is related to the size of the file requested
whose distribution is more likely to be bounded by the maximal size of a chunk. As it will
be  seen, even  within this  simplified setting  (in order  to have  a  nice probabilistic
description  of the  process), mathematical  problems turn  out to  be quite  intricate to
solve. In this respect,  our study could be seen as a first step in the analysis of flash
crowd scenarios. It turns
out that our current investigations in the  general case seem to show that the exponential
distribution does not  have a critical impact  on the qualitative behavior as  long as the
FIFO policy  is used by servers. Mathematically, however, numerous  technical points are
not settled in this case.

We assume that peers requesting the file  from the same server are served according to the
FIFO discipline. Note that, because  of the exponential distribution assumption, this case
is equivalent to  the Processor-Sharing discipline, i.e., when $N$ peers  are present for a
duration  of time  $h$,  each of  them  receives the  amount of  work  $h/N$.  Just  after
completing the file  download, a peer immediately becomes a server  from which other peers
can retrieve  the file.  The  problem of  ``free riders'', i.e.,  peers who do  not become
servers after  service completion, is not discussed here.   As it will be  seen, this feature
does not  change significantly the qualitative  properties of the system.   The problem of
servers who disconnect while they have downloads in progress will not be discussed in this
paper.

It  is  worth noting  that  the  model under  consideration  describes  a ``flash  crowd''
scenario. Indeed,  a peer having a  file accepts to share  it with other peers  and we are
interested in the dynamics  of the sharing process when a large  population of peers tries
to download the  file.  Moreover, since the durations for which  these peers stay inactive
are independent and identically distributed, the flow of arrivals of peers into the system
is  not stationary,  but rather  accumulates at  the beginning  and is then less  and less
intense.  We are hence  interested in the transient regime of the  system. Contrary to the
earlier studies \cite{Kurose,Massoulie,Srikant}, we are not interested in the steady state
regime of the system, where peers continually join and leave the system.

It  is  intuitively  clear  that  there  should  exist  two  different  regimes  for  this
system.  Initially, it  starts congested:  many peers  request the  file, and  only  a few
servers are  available.  Afterward,  the situation is  reversed: there  are a large  number of
servers and only a few requests from the remaining inactive peers.

These  two   regimes  clearly  appear  in   Figure~\ref{fig:empty_servers}  depicting  the
simulation results with  $N = 10^6$ peers and  $\rho = 5/6$. It shows that  before time $T
\approx 7$  time units  (or equivalently mean  download times),  there are almost  no empty
servers, while  after that  time, more  and more servers  are empty  until all  peers have
completed  their  download.  But as long as the input  rate is high, a new
server immediately receives a customer. This is all the more true under the routing policy
considered, since new peers entering the system choose an empty server if any.

\begin{figure}[ht]
\scalebox{0.66}{\includegraphics{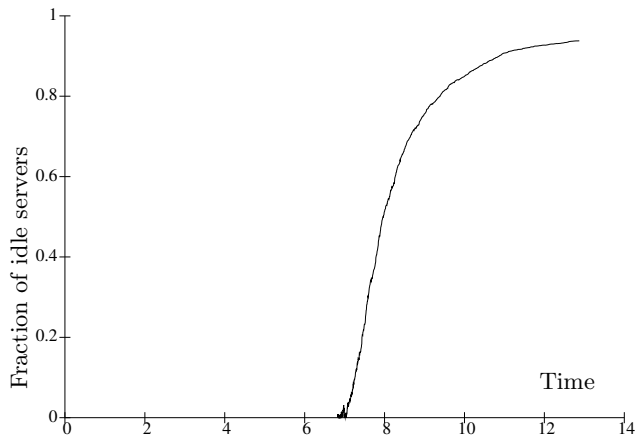}}
\put(-240,20){\rotatebox{90}{Fraction of idle servers}}
\put(-40,20){Time}
\caption{Fraction of idle servers: $N {=} 10^6$ and $\rho {=} 5/6$.}\label{fig:empty_servers}
\end{figure}

\subsection{A Non-Trivial Queueing Model}

From the above  description, the system can  be represented by means of  a queueing system
with a random number of queues.  Initially, the system is composed of a single server, and
once a customer  has completed its service, it  becomes a new server. Since  only a finite
total  number  of  customers is  considered,  there  are  eventually  as many  servers  as
customers.

When peer  inter-arrival times  and file  download times are  assumed to  be exponentially
distributed,  a minimal  Markovian  representation  of this  queueing  model requires  the
knowledge of the number of peers which  are still asleep and the number of peers connected
to each server.  Since this Markov  process is ultimately absorbing (all peers are servers
at the end), the transient behavior of the system is of course the main object of interest
in the analysis. Even in very  simple queueing systems, the transient behavior is delicate
to  analyze  and much  more  difficult  to describe  than  the  stationary behavior.   The
classical $M/M/1$ queue is a good (and  simple) example of such a situation when transient
characteristics  are  not   easy  to  express  with  simple   closed  form  formulas.  See
Asmussen~\cite{Asmussen:01} for example.

Given the multi-dimensional description (with  unbounded dimension) of the Markov process,
the system considered here is much more intricate and challenging. To analyze this system,
a simpler mathematical  model with urns and  balls is used to investigate  the duration of
the  first regime  of this  system.  The  specific  point addressed  in this  paper is  to
describe the transient behavior when $N$ becomes large.

\subsection{Modeling the First Regime}\label{sec:modeling}

Initially,  the input rate is large and therefore a newly  created server 
receives very quickly many requests from the numerous peers becoming  active. The first regime
described  in the previous  section and  illustrated in  Figure~\ref{fig:empty_servers} is
hence characterized  by the fact that  the duration  times during which some  servers are
idle are negligible.   In a  second phase the number of empty servers begins to
be  significant before  increasing very  rapidly in  the last  phase.  This  phenomenon is
discussed in Section~\ref{sec:discussion}. For the first regime, this leads us to describe
the dynamics of the system as follows.

Let $S_n$  be the time  at which the  $n$-th server is  created, with the  convention that
$S_0= 0$  (the initial  server has label  0). During  the $n$-th time  interval $(S_{n-1},
S_n)$ for  $n \geq 1$, there  are by definition exactly  $n$ servers. So if  we assume, as
argued above, that empty servers are negligible during the first regime, $S_n-S_{n-1} $ is
well  approximated by the  minimum of  $n$ independent  exponential random  variables with
parameter $1$.  The random variable $S_n$  can thus be represented  as $S_{n-1}+ E_n^1/n$,
where $E_n^1$  is an  exponential random  variable with parameter  $1$ independent  of the
past. In particular, during the first regime, the following approximation is accurate.


\begin{approximation}\label{approx:heuristic}
For $n\in\N$, as long as the system is still in the first regime, the instant of creation of the $n$-th server is given by $S_n \approx T_n$, where
\begin{equation}\label{eq:T}
	T_n=\sum_{k=1}^n \frac{E_k^1}{k},
\end{equation}
and $E_n^1$ being   i.i.d.  exponential random variables with unit mean.
\end{approximation}

Despite this approximation  seems to be quite rough,  (a rigorous mathematical formulation
of  the   approximation  $S_n   \approx  T_n$  seems   to  be  difficult   to  establish),
Proposition~\ref{prop:heuristic}  and  the  subsequent  discussion  below  provide  strong
arguments to  support its accuracy.  In the  definition of the above  approximation, it is
essential to  determine the duration  of the first  regime, in particular to  know whether
$S_n \approx T_n$ holds or not.

For instance, one could consider as definition for the duration of the first regime the last time when there are no empty servers. This time is
unfortunately not a stopping time and turns out to be much more difficult to study. In
Section~\ref{sec:discussion}, we shall consider different heuristics for evaluating the
length of the first regime. We start the analysis by introducing the index $\nu$ defined
as follows. 

\begin{definition}\label{def:heuristic}
The duration of the first regime is defined as $S_\nu$, where $\nu$ is the first index
$n \geq 1$ so that one or no peer arrive  between $S_{n-1}$ and $S_{n}$.
\end{definition}

According to this definition, the first regime lasts as long as between the creation of
two successive servers, at least two peers arrive in the system. The intuition behind this
heuristic is that, because of the policy for the choice of servers, if many peers arrive
in any interval, then the least loaded servers will receive requests from arriving peers. Thus,
as long as many peers arrive, it is quite rare  for a server to remain  empty. 

The phase transition should occur when the number of arrivals between the creation of two
successive servers is not sufficient to give work to empty servers which are
created. In particular, if no peers arrive in some interval, then there will be at least
two empty servers at the beginning of the next time interval. So the first time when only
a few peers arrive in some interval should be a good indication on the current state 
of the system. A probably more natural heuristic would have been to consider the first
interval in which no peer arrives. Nevertheless, an argument in favor of the former  heuristic is that it  enjoys the following nice
property.

\begin{prop}\label{prop:heuristic}
For $n < \nu$, at most two servers are simultaneously empty in the $n$-th interval $(S_{n-1}, S_n)$.
\end{prop}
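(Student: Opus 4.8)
The plan is to follow the integer‑valued process $I(t)$ counting the number of empty servers at time $t$, and to show that on each interval $(S_{n-1},S_n)$ this process attains its maximum at the left endpoint, so that it is enough to bound $I(S_{n-1}^{+})$, the value just after the $n$-th server is created.

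The first step is a monotonicity remark. On the open interval $(S_{n-1},S_n)$ no download completes, since by definition $S_n$ is the next server creation time; hence the only events are arrivals of peers, each routed to a least loaded server. If some server is empty the peer is sent to such a server and $I$ decreases by one, while if no server is empty then $I=0$ both just before and just after the arrival. Consequently $t\mapsto I(t)$ is non-increasing on $(S_{n-1},S_n)$, and moreover a server which is non-empty at some instant of this interval stays non-empty until $S_n$. It therefore suffices to bound $I(S_{n-1}^{+})$.

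The second step analyses the jump at $S_{n-1}$. Exactly one thing happens there: a peer being served at some server $j$ finishes its download and becomes the new server, labelled $n-1$. This creates one fresh empty server, and in addition turns server $j$ into an empty server precisely when $j$ was holding only that single peer; hence $I(S_{n-1}^{+})\le I(S_{n-1}^{-})+2$. So I am reduced to proving $I(S_{n-1}^{-})=0$ for $2\le n<\nu$ --- the case $n=1$ being immediate, since then there is a single server and $I\le 1$. This I would establish by induction on $n$: by the induction hypothesis $I\le 2$ on the previous interval $(S_{n-2},S_{n-1})$, so in particular $I(S_{n-2}^{+})\le 2$; since $1\le n-1<\nu$, the definition of $\nu$ guarantees that at least two peers arrive during $(S_{n-2},S_{n-1})$; the first of these (at most two) arrivals are routed to empty servers, and by the monotonicity of $I$ on this interval it has reached $0$ by time $S_{n-1}^{-}$. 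Combining with $I(S_{n-1}^{+})\le I(S_{n-1}^{-})+2$ gives $I(S_{n-1}^{+})\le 2$, and the monotonicity on $(S_{n-1},S_n)$ propagates this bound over the whole $n$-th interval, closing the induction.

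The argument is short once the right quantity has been isolated, and the only genuine subtlety lies in the bookkeeping at $S_{n-1}$: the crude increment ``$+2$'' there would be useless on its own, and it is saved only by the fact that the preceding interval, carrying at least two arrivals, flushes every empty server before it ends. Invoking this ``at least two arrivals'' property for interval $n-1$ is precisely what the hypothesis $n<\nu$ licenses, since it also forces $n-1<\nu$; a secondary point to check with care is that no download completes strictly inside an interval $(S_{m-1},S_m)$, which is what both the monotonicity of $I$ and the ``stays non-empty'' assertion rest upon.
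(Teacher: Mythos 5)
Your proof is correct and follows essentially the same inductive argument as the paper. The paper's version is terser, leaving implicit the jump bound $I(S_{n-1}^{+})\le I(S_{n-1}^{-})+2$ at server-creation times and the monotonicity of the empty-server count within an interval; you make both explicit, but the underlying idea --- at least two arrivals in interval $n-1$ flush all empty servers before $S_{n-1}$, and the creation event adds at most two --- is identical.
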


\begin{proof}
The proof is by induction. For $n = 1$, the property is trivial, since there is only one
server in the first interval. Consider now $1 < n < \nu$, and suppose that the property
holds for $n-1$. Since at least two peers arrive in the $(n-1)$-th interval, and since
these peers are necessarily routed to empty servers, if any, there is no empty
server just before $S_{n-1}$. Therefore, just after $S_{n-1}$, there are at most two empty servers, and
so the property holds as long as $n < \nu$.
\end{proof}

We are  now able to justify  Approximation~\ref{approx:heuristic}. Indeed, for  $n < \nu$,
the number of non idle servers is  between $n-2$ and $n$. For $n$ large, approximately $n$
servers  are busy,  thus $S_n  - S_{n-1}$  is close  in distribution  to  an exponentially
distributed  random variable with  parameter $n$.   During the  first time  intervals, the
number of empty servers  is negligible. Indeed, consider any finite index  $n$, then it is
easy  to  see that  the  mean number  of  peers  that arrive  in  the  $n$-th interval  is
proportional to $N$. So after the creation  of the $n$-th server, the mean time before the
next arrival behaves  as $1/N$, and so is  very small when $N$ is  large. This intuitively
shows that  the fraction of idle  servers is initially  negligible, which
justifies Approximation~\ref{approx:heuristic}.

From now on, the identification of $S_n$  and $T_n$, where the sequence $(T_n)$ is defined
by Equation~\eqref{eq:T},  is assumed  to hold.  Results on $T_n$  can be  assumed to hold for
$S_n$ when $n < \nu$.

\section{Urn and Ball Problem}
\label{urnball}

Denote by  $(E_i^\rho, 1\leq  i\leq N)$ an i.i.d.\ sequence of  exponentially distributed
random variables with parameter $\rho$. For $i \leq N$, $E_i^\rho$ is the time at which the
$i$-th peer becomes active.

We introduce the  following urn and ball  model on the real line:  The interval $(T_{n-1},
T_n)$ is  the $n$-th urn  and the variables $(E_i^\rho, 1\leq i\leq N)$ are the
locations of $N$ balls thrown on the real line. The set $\{T_{n-1}\leq E_i^\rho\leq T_n\}$
is simply the event  that the $i$-th ball falls into the  $n$-th urn. Conditionally on the
sizes of  the urns, i.e., on ${\cal T}=(T_n)$,  we have that  the probability of  such an
event (which does not depend on $i$) is
\begin{multline}\label{eq:p}
P_n = \P \left(T_{n-1} < E^{\rho}_i < T_n \mid {\cal T} \right) \\ = e^{- \rho T_{n-1}}\left(1 - e^{-\rho E_n^1 / n}\right),
\end{multline}
where the random variables  $E_n^1$, $n \geq 1$, are independent and exponentially distributed with mean unity.

With the above formulation, we have then to deal with the following urn and ball model:
\begin{enumerate}
\item A random probability distribution ${\cal P}=(P_n)$ is given 
(urns with random sizes).
\item $N$ balls are thrown independently according to the probability distribution ${\cal P}$.
\end{enumerate}

It is worth  noting that the above urn and  ball model has an infinite  number of urns. In
addition, although urn  and ball problems have been widely studied  in the literature, our
model  presents a  remarkable feature:  For $i\geq  1$, a  ball falls  into  urn  $i$ with
probability $P_i$ which  is a random variable, but conditionally  on the sequence $(P_n)$,
this is a classical urn and ball problem.  Mathematical results for urn models with random
distributions    are    quite    rare.    See   Kingman~\cite{Kingman:01}    and    Gnedin
\etal~\cite{Gnedin:01}  and the  references therein  where some  related models  have been
investigated.

The random model under consideration will give us some information on the behavior of our
system. The following proposition establishes a simple but important characterization 
for the asymptotic behavior of $(P_n)$. 

\begin{prop}\label{prop:asymptotic}
Let $(E_i^1)$, $i \geq 1$,  be independent exponential random variables with parameter $1$. Then, for $n\in\N$ 
\begin{equation}\label{vougeot}
T_n=\sum_{k=1}^n \frac{E_k^1}{k} \stackrel{\text{dist.}}{=} \max_{1\leq k\leq n} E_k^1,
\end{equation}
and the sequence $(T_n-\log n)$ converges almost surely to a finite random variable
$T_\infty$ whose distribution is given by $\P(T_\infty\leq x)=\exp(-\exp(-x))$ for $x\in\R$.

The conditional probability $P_n$ of throwing  a ball into the $n$-th urn can be written as 
\begin{equation}\label{RD}
P_n = \frac{\rho}{n^{\rho+1}} X_{n-1} Z_n,
\end{equation}
where 
\[
Z_n = \frac{n}{\rho}\left( 1 - e^{-\rho E_n^1 / n} \right) \textrm{ and } X_{n-1} =
n^{\rho} e^{-\rho T_{n-1}}
\]
are independent random variables. As $n$ goes to infinity,
$X_n$ (resp.\ $Z_n$) converges in distribution to $X_\infty$ (resp.\ $Z_\infty$). The 
convergence of $(X_n)$ to $X_{\infty}$ holds almost surely and in $L_q$, for any $q\geq 1$. 

The limiting variable $Z_\infty$ has an exponential distribution with parameter $1$ and
$X_\infty$ has a Weibull distribution with parameter $1/\rho$, 
\begin{equation}\label{eq:lawX}
\P(X_\infty\geq x)=e^{-x^{1/\rho}},\ x\geq 0. 
\end{equation}
\end{prop}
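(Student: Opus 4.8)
The plan is to use the distributional identity \eqref{vougeot} as the backbone: it transfers everything we need about the \emph{law} of $T_n$ to the far more tractable maximum of i.i.d.\ exponentials, while the additive form $T_n=\sum_{k=1}^n E_k^1/k$ yields the pathwise (almost sure) statements. First I would prove \eqref{vougeot} by comparing Laplace transforms. On one hand, by independence, $\E\bigl[e^{-\lambda T_n}\bigr]=\prod_{k=1}^n\E\bigl[e^{-\lambda E_k^1/k}\bigr]=\prod_{k=1}^n \frac{k}{k+\lambda}$. On the other hand, writing $M_n=\max_{1\le k\le n}E_k^1$, whose density is $n(1-e^{-x})^{n-1}e^{-x}$, the substitution $u=1-e^{-x}$ turns $\E\bigl[e^{-\lambda M_n}\bigr]$ into the Beta integral $n\int_0^1 u^{n-1}(1-u)^{\lambda}\,du=\frac{\Gamma(n+1)\Gamma(\lambda+1)}{\Gamma(n+\lambda+1)}=\prod_{k=1}^n\frac{k}{k+\lambda}$. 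The two transforms coincide for all $\lambda\ge0$, hence so do the laws. (Alternatively one may invoke R\'enyi's representation of exponential order statistics through their independent spacings.) For the convergence of $T_n-\log n$, note that $T_n-H_n$, with $H_n=\sum_{k=1}^n 1/k$, is a sum of independent centred variables with $\sum_k \Var(E_k^1/k)=\sum_k 1/k^2<\infty$, hence an $L^2$-bounded martingale; it therefore converges almost surely, and since $H_n-\log n\to\gamma$, so does $T_n-\log n$, to a finite limit $T_\infty$. Its law is identified from \eqref{vougeot}: $T_n-\log n$ has the same law as $M_n-\log n$, and $\P(M_n-\log n\le x)=(1-e^{-x}/n)^n\to e^{-e^{-x}}$; almost sure convergence implies convergence in distribution, so $T_\infty$ is Gumbel.

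The decomposition \eqref{RD} is then immediate: substituting $e^{-\rho T_{n-1}}=X_{n-1}/n^{\rho}$ and $1-e^{-\rho E_n^1/n}=\rho Z_n/n$ into \eqref{eq:p} gives $P_n=\frac{\rho}{n^{\rho+1}}X_{n-1}Z_n$. The independence of $X_{n-1}$ and $Z_n$ follows because $X_{n-1}=n^{\rho}e^{-\rho T_{n-1}}$ is a function of $(E_1^1,\dots,E_{n-1}^1)$ only while $Z_n$ is a function of $E_n^1$ only, and the $E_k^1$ are i.i.d.

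For the limits: for fixed $x\ge0$ and $n$ large enough that $\rho x/n<1$, inverting the definition of $Z_n$ gives $\P(Z_n\le x)=\P\bigl(E_n^1\le -\tfrac{n}{\rho}\log(1-\tfrac{\rho x}{n})\bigr)=1-(1-\tfrac{\rho x}{n})^{n/\rho}\to 1-e^{-x}$, so $Z_n$ converges in distribution to an exponential variable $Z_\infty$ with parameter $1$. Since $X_n=e^{-\rho(T_n-\log n)}$ and $T_n-\log n\to T_\infty$ almost surely, $X_n\to X_\infty:=e^{-\rho T_\infty}$ almost surely, and $\P(X_\infty\ge x)=\P(T_\infty\le -\rho^{-1}\log x)=\exp(-x^{1/\rho})$, which is \eqref{eq:lawX}. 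The step that needs the most care is the $L_q$ convergence. The cleanest route is a moment bound: $\E\bigl[X_n^p\bigr]=n^{p\rho}\prod_{k=1}^n\frac{k}{k+p\rho}=n^{p\rho}\,\frac{\Gamma(n+1)\Gamma(1+p\rho)}{\Gamma(n+1+p\rho)}\longrightarrow \Gamma(1+p\rho)<\infty$ by Stirling's formula, so $\sup_n\E[X_n^p]<\infty$ for every $p\ge1$; taking $p=q+1$ shows $(X_n^q)_n$ is uniformly integrable, and almost sure convergence together with uniform integrability gives convergence in $L_q$. Equivalently, $W_n:=e^{-\rho T_n}\prod_{k=1}^n\frac{k+\rho}{k}$ is a non-negative martingale with $\prod_{k=1}^n\frac{k+\rho}{k}\sim n^{\rho}/\Gamma(1+\rho)$, so $X_n$ is asymptotically a constant multiple of a convergent martingale, and the same Gamma-function computation supplies the $L^p$ bounds. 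I expect this uniform-integrability estimate to be the only genuinely technical ingredient; everything else reduces to \eqref{vougeot} and elementary manipulations.
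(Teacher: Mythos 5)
Your proof is correct and, apart from one ingredient, follows the paper's route: the martingale $M_n=T_n-H_n$ is bounded in $L^2$ and hence converges a.s.; the law of the Gumbel limit $T_\infty$ is read off from \eqref{vougeot}; $Z_n\to Z_\infty$ in distribution is an elementary computation; and the $L_q$ convergence of $X_n$ comes from the explicit Gamma-function moment computation $\E[X_n^p]=(n+1)^{p\rho}\prod_{k=1}^n \tfrac{k}{k+p\rho}\to\Gamma(1+p\rho)$ combined with uniform integrability. (You wrote $n^{p\rho}$ instead of $(n+1)^{p\rho}$, which is immaterial for the asymptotics.) The one genuine divergence is your derivation of \eqref{vougeot}: the paper uses R\'enyi's spacings representation of the exponential order statistics, whereas you match Laplace transforms, $\E[e^{-\lambda T_n}]=\prod_{k=1}^n\tfrac{k}{k+\lambda}$ versus the Beta integral for $\E[e^{-\lambda \max_k E_k^1}]$. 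Both are standard and short; the transform argument requires no appeal to order-statistics theory, while the R\'enyi argument is arguably more structural and exposes the independent-spacings picture that underlies the whole model. Your closing observation that $W_n=e^{-\rho T_n}\prod_{k\le n}\tfrac{k+\rho}{k}$ is a non-negative martingale is a nice reformulation (it packages the same moment identity), though not used in the paper.
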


\begin{proof}
Let $E_{(1)}\leq E_{(2)}\leq\cdots\leq E_{(n)}$ be the variables $(E_k^1, 1\leq k\leq n)$ in increasing order. In particular $E_{(n)}=\max_{1\leq k\leq n} E_k^1$. 
With the convention $E_{(0)}{=}0$, due to standard properties of the exponential
distribution, the variables $E_{(i+1)}{-}E_{(i)}$, $i=0$,{\ldots}$,n{-}1$ are independent
and the variable $E_{(i+1)}-E_{(i)}$ is the minimum of $n-i$ exponential variables with
parameter $1$, i.e., has the same distribution as $E_{n{-}i}^1/(n{-}i)$. The distribution
identity~\eqref{vougeot} then follows. 

Since $Z_n\stackrel{\text{dist.}}{=} n/\rho( 1 -\exp(-\rho E_1 / n))$, it converges in
distribution to an exponential distribution with parameter one.  

Define 
\[
M_n = \sum_{k=1}^n \frac{E_k^1 - 1}{k} = T_n - H_n,
\]
where $(H_n)$ is the sequence of harmonic numbers, $H_n=1+1/2+\cdots+1/n$.
The sequence $(M_n)$ is clearly a martingale, it is bounded in ${L}_2$ since
\[
\E M_n^2 = \sum_{k=1}^n \frac{\E \left(E_k^1-1\right)^2}{k^2} =  \sum_{k=1}^\infty \frac{1}{k^2} < +\infty.
\]
It therefore converges almost surely. See Williams~\cite{williams91:0} for example. 
The almost sure convergence of $(T_n-\log n)= (M_n +H_n-\log n)$ is thus proved. 
Identity~\eqref{vougeot} gives that, for $x\geq 0$,
\[
\P(T_n-\log n\leq x)=(1-e^{-x-\log n})^n\sim e^{-e^{-x}},
\]
as $n$ goes to infinity. 

Since
\[
X_n=e^{-\rho M_n} e^{\rho(\log(n+1)-H_n)},
\]
one gets the almost sure convergence of $(X_n)$. 
It is easy to check that, for $q\geq 0$,
\begin{align}\label{eq:exp_as}
\E\left(X_n^q\right)&=(n+1)^{q\rho} \prod_{i=1}^n \frac{1}{1 + q\rho / i} \notag \\ &=(n+1)^{q\rho} \frac{\Gamma(n)}{\Gamma(n+q\rho)}\Gamma(q\rho) 
 \sim \Gamma(q\rho),
\end{align}
when $n \to \infty$, where $\Gamma$ is the usual Gamma function, and where the last equivalence easily comes from Stirling's Formula. In particular, for any $q\geq 0$, the
$q$-th moment of $X_n$ is therefore bounded with respect to $n$. One deduces the
convergence in $L_q$ of the sequence $(X_n)$. Since $X_n=\exp(-\rho(T_n-\log(n+1)))$, one
has the equality in distribution $X_\infty=\exp(-T_\infty)$ which gives the law of $X_\infty$. 
\end{proof}

It is important to note that the probability distribution ${\cal P}=(P_n)$ is a {\em random} element in the set of probability
distributions on $\N$. The decay of this distribution follows a power law with parameter
$\rho+1$, because according to the previous proposition, $n^{\rho+1} P_n$ converges in
distribution to $\rho X_\infty Z_\infty$. 
Using the asymptotic behavior derived in~\eqref{eq:exp_as} with $q = 1$, it is easy to see that the average
probability for a ball to fall into the $n$-th urn satisfies the following relation
\begin{equation}\label{det}
\E(P_n)\sim \frac{\rho\Gamma(\rho)}{n^{\rho+1}}.
\end{equation}
This equivalence suggests the introduction of a deterministic version of the urn and
ball problem considered.

\section{Deterministic Problem}
\label{deterurn}
\subsection{Description}

Denote by ${\cal Q}=(q_n)$ a probability distribution on $\N$ such that
\begin{equation}\label{eqdet}
\lim_{n\to+\infty} n^\delta q_n=\alpha,
\end{equation}
for some $\alpha>0$ and $\delta>1$. For each $n$, $q_n$ can be seen as the probability for
a ball to fall in the  $n$-th urn. When $\delta=\rho+1$ and $\alpha=\rho\Gamma(\rho)$, the
sequence   $(q_n)$  has   the   same  asymptotic   behavior   as  $\E   (P_n)$  given   by
Equation~\eqref{det}. Hence, this model may  be considered as the deterministic equivalent
of the urn and ball problem defined in  the previous section. For the sake of clarity, the
problem with the probability distribution ${\cal  P}$ (resp.\ ${\cal Q}$) will be referred
to as the random (resp. deterministic) problem.

The deterministic problem amounts to throwing $N$ exponential variables with parameter
$\rho$ on the half-real 
line, where this line has been divided into deterministic intervals $(t_{n-1}, t_n)$ with
$t_n = \E T_n$.  The main quantity of interest in the following is the asymptotic behavior with respect to
$N$ of the index of the first urn that does not receive any ball. 

\begin{definition}
Let us denote by $\eta_i^R(N)$ (resp.\ $\eta_i^D(N)$) the number of balls in the $i$-th urn when $N$
balls have been thrown in the random (resp.\ deterministic) urn and ball problem,  and define
\begin{align}
\nu^R(N) = &\inf\{i\geq 1: \eta_i^R(N)=0\}, \label{nu}\\
\nu^D(N) = & \inf\{i\geq 1: \eta_i^D(N)=0\}.\notag
\end{align}
\end{definition}
In view of Definition~\ref{def:heuristic}, to investigate the duration of the first regime
of the system,  the asymptotic behavior of the sequences  $(\nu^R(N))$ and $(\nu^D(N))$ is
analyzed. Since  we consider that the first regime lasts  until one or no peers
arrive  between the  creation  of two  successive  servers, we  should  have to  consider
$\nu'(N)=\inf\{i\geq  1: \eta_i(N) \leq  1\}$ to  be rigorous.  In fact,  the mathematical
analysis of the index of the first empty  urn can easily be extended to the first urn that
receives less than $k$ balls. For the sake of simplicity, we therefore only treat the case
$k=0$. Neither  the orders of  magnitude nor the  asymptotic behaviors established  in the
following are affected by  the value of $k$, and in particular  if we consider $1$ instead
of $0$.

To conclude this section, let us give a rough approximation of the correct order of magnitude for
$\nu^R(N)$ and $\nu^D(N)$ as $N$ gets large. Rigorous mathematical analysis is carried out
in Section~\ref{sec:math}, while Section~\ref{sec:discussion} compares the insights
provided by the two models. 

For $i\geq 1$, $\E(\eta_i^D(N)) = Nq_i \sim \alpha N/i^{\rho+1}$. Hence, in the deterministic
model, a finite number of balls will fall in the $i$-th urn as soon as $i$ is
of the order of $N^{1/(\rho+1)}$ as $N$ becomes large. Hence we expect that in the
deterministic model, $\nu^D(N) / k(N)$ converges in distribution for $k(N) = N^{1/(\rho +
1)}$. Theorem~\ref{thm:determinist} below shows that the location of the first empty urn is in
fact slightly smaller than $N^{1/(\rho + 1)}$, i.e., of the order of $(N / \ln
N)^{1/(\rho+1)}$. Nevertheless this heuristic approach  gives the correct  exponent in $N$. 

Although $\E(\eta_i^R(N))$  has the same asymptotic behavior,  the corresponding heuristic
approach in the case of  the random model is more subtle. Indeed, we have 
$$
\E(\eta_i^R(N)) = N\E(P_i)  \sim N\rho\Gamma(\rho)/i^{\rho+1},
$$
so  the number of  balls falling  in the $i$-th urn should be of the order  $N i^{-\rho-1}$. However, in the random model, the $i$-th interval is with
random length $E^1_i / i$. So from $T_{i-1}$, the next point $T_i$ is at a distance $E_i^1/ i$ and the  first ball is at a distance corresponding to  the minimum of $N i^{-\rho-1}$
i.i.d.\ exponential  random variables with  parameter $1$. Thus, with  this approximation,
the $i$-th interval is empty with probability
\[
\P \left( \frac{E_i^1}{i} \leq \frac{i^{\rho+1}}{N} E_0^1 \right) = 
\frac{1}{1+{N}/{i^{\rho+2}}}.
\]
When  $N \to  \infty$, this  probability is  non negligible  as soon  as $i$  is  of order
$N^{1/(\rho+2)}$, which is significantly below what we found in the deterministic
case. Theorem~\ref{thm:random}  below shows that this  is indeed the  correct answer.  The
order of  magnitude is one order smaller,  compared to the deterministic  case, because of
the variability of the intervals size: to some extent, a very small interval is generated,
so that no balls fall in it, while in the deterministic case, some balls would have.

\subsection{Asymptotic Analysis}

\label{sec:math}
Cs{\'a}ki  and F{\"o}ldes~\cite{Starski} gives the asymptotic behavior
of the distribution  of $\nu^D$ when $N$ is large. 
A more complete  description of the locations of  the first empty  urns (and not only for the
first one) can however be achieved. For
this purpose, the variable  $W_N^k$ is defined as the number of  empty urns whose index is
less than $k$ when $N$ balls have been thrown. This random variable is formally defined as 
\begin{equation}\label{eqw}
W_N^k=\sum_{i=1}^k I_{N,i}, \text{ with } I_{N,i}=\ind{\eta^D_i(N) = 0}.
\end{equation}
The distribution of $W_N^k$ is
analyzed when $k$ is dependent on $N$. First, some estimates for  the mean  value and the variance of $W_N^k$ are required. 

\begin{prop}\label{propE}
Assume that the sequence $(q_i)$ is non-increasing. For $x>0$, if
\begin{multline}\label{kappa}
\kappa_x(N)= \\  \left\lfloor \left( \alpha\delta \frac{N}{\log N}\right)^{1/\delta}
\left[1+\frac{1+\delta}{\delta}\frac{\log\log N}{\log N}+\frac{\log x}{\log N}\right]\right\rfloor,
\end{multline}
where $\lfloor y\rfloor$ is the integral part of $y>0$, then 
\begin{equation}\label{limW}
\lim_{N\to +\infty} \E\left(W_N^{\kappa_x(N)}\right)=(\alpha\delta)^{1/\delta} x.
\end{equation}
\end{prop}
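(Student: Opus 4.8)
The plan is to compute $\E(W_N^k) = \sum_{i=1}^k \P(\eta_i^D(N) = 0)$ directly and then choose $k = \kappa_x(N)$ to make the sum converge. Since the $N$ balls are thrown i.i.d.\ according to $\mathcal{Q}$, we have $\P(\eta_i^D(N) = 0) = (1 - q_i)^N$, so
\[
\E(W_N^k) = \sum_{i=1}^k (1 - q_i)^N.
\]
The strategy is to show that, with the stated choice of $k$, this sum is dominated by the last few terms and behaves like $\int_0^k \exp(-N q_{\lfloor t\rfloor})\,dt$, which in turn is governed by the tail where $q_i \sim \alpha i^{-\delta}$.

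First I would make the asymptotics precise: since $n^\delta q_n \to \alpha$, write $q_i = \alpha i^{-\delta}(1 + o(1))$ and note $(1-q_i)^N = \exp(N\log(1-q_i)) = \exp(-Nq_i(1+o(1)))$, valid uniformly for $i$ in the relevant range (where $q_i \to 0$). The exponent at index $i$ is then $\approx \alpha N i^{-\delta}$. Setting this equal to $\log N$ gives the leading scale $i \asymp (\alpha N/\log N)^{1/\delta}$, which motivates the definition of $\kappa_x(N)$; the correction factors $\tfrac{1+\delta}{\delta}\tfrac{\log\log N}{\log N}$ and $\tfrac{\log x}{\log N}$ are exactly the second-order terms one needs so that, after exponentiating, a clean finite limit survives. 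Concretely, I would substitute $i = \kappa_x(N)(1 - s)$ for small $s$, expand $i^{-\delta} = \kappa_x(N)^{-\delta}(1 + \delta s + O(s^2))$, and track how $\alpha N i^{-\delta} = \log N + (1+\delta)\log\log N/\text{(stuff)} \cdots$ — the point is that $\alpha N \kappa_x(N)^{-\delta}$ is arranged to equal $\log N - (1+\delta)\log\log N - \log x + o(1)$ by the very choice of $\kappa_x$, so that $(1-q_{\kappa_x(N)})^N \sim \exp(-\log N + (1+\delta)\log\log N + \log x) = x(\log N)^{1+\delta}/N$.

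Next I would convert the sum to an integral. Using monotonicity of $(q_i)$ (hence of $i \mapsto (1-q_i)^N$, which is increasing in $i$), the sum $\sum_{i=1}^{\kappa_x(N)}(1-q_i)^N$ is squeezed between $\int_0^{\kappa_x(N)}\exp(-Nq_{\lceil t\rceil})\,dt$ and the same integral shifted by one; both have the same asymptotics. Changing variables to $u$ via $i = \kappa_x(N) - u$ (or directly to the exponent), the integrand near the top endpoint is $\approx \frac{x(\log N)^{1+\delta}}{N}\exp(-\delta u \log N / \kappa_x(N) \cdot \kappa_x(N) \cdots)$ — more carefully, $\frac{d}{di}[N q_i] \approx -\delta \alpha N i^{-\delta-1} \approx -\delta \log N / \kappa_x(N)$ at $i = \kappa_x(N)$, so the exponent decreases linearly with slope $\delta \log N/\kappa_x(N)$ as $i$ decreases from $\kappa_x(N)$. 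The integral over $i$ then contributes a factor $\kappa_x(N)/(\delta \log N)$, giving
\[
\E(W_N^{\kappa_x(N)}) \sim \frac{x(\log N)^{1+\delta}}{N}\cdot\frac{\kappa_x(N)}{\delta\log N}
\sim \frac{x(\log N)^{\delta}}{N\delta}\cdot\left(\alpha\delta\frac{N}{\log N}\right)^{1/\delta} = (\alpha\delta)^{1/\delta}x,
\]
since $(\log N)^\delta/N \cdot (N/\log N)^{1/\delta} = ((\log N)^\delta/N)^{1-1/\delta}\cdot(\cdots)$ — the powers of $N$ and $\log N$ must be checked to cancel, and they do precisely because the exponent $1/\delta$ in $\kappa_x$ is matched to the power-law exponent $\delta$ in $q_n$.

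The main obstacle is controlling the error terms uniformly: I need the expansion $(1-q_i)^N = \exp(-Nq_i + O(Nq_i^2))$ with $Nq_i^2 \to 0$ over the whole summation range (true since $q_i \to 0$ and $Nq_i = O(\log N)$ there, so $Nq_i^2 = O(q_i\log N) \to 0$), and I need the replacement $q_i = \alpha i^{-\delta}(1+o(1))$ to hold with an $o(1)$ that is small enough not to disturb the delicate $\log\log N/\log N$-level bookkeeping — this is where the hypothesis $\delta > 1$ and the precise second-order form of $\kappa_x(N)$ are essential. I would also separately check that the contribution of small indices $i$ (say $i \le \kappa_x(N)/2$) is negligible: there $Nq_i \gg \log N$, so $(1-q_i)^N = o(1/N^2)$ say, and summing $O(\kappa_x(N))$ such terms still gives $o(1)$. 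Assembling the upper and lower Riemann-sum bounds with these error estimates then yields~\eqref{limW}.
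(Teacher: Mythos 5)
Your plan is the same as the paper's: write $\E(W_N^k)=\sum_{i=1}^k(1-q_i)^N$, replace $(1-q_i)^N$ by $e^{-Nq_i}$, bound the sum above and below by integrals using monotonicity, and do Laplace-type asymptotics of the integral with $k=\kappa_x(N)$. (The paper gets the $e^{-Nq_i}$ replacement via the uniform bound $\sup_{0\le y\le 1}|e^{-Ny}-(1-y)^N|\le 2/N$, summed to give error $\le 2k/N\to 0$; your version, $Nq_i^2\to 0$ on the bulk together with the observation that small $i$ contribute $o(1)$, also works but needs the case split you allude to.)

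However, the arithmetic at the end of your argument is wrong as written, and the final display does not actually equal $(\alpha\delta)^{1/\delta}x$ — the powers of $N$ and $\log N$ you say "must be checked to cancel" do not cancel. There are two factor-of-$\delta$ slips. First, from $\kappa_x(N)=(\alpha\delta N/\log N)^{1/\delta}(1+a_N)$ with $a_N=\frac{1+\delta}{\delta}\frac{\log\log N}{\log N}+\frac{\log x}{\log N}$, one gets
\[
\alpha N\,\kappa_x(N)^{-\delta}=\frac{\log N}{\delta}(1-\delta a_N+O(a_N^2))=\frac{\log N}{\delta}-\frac{1+\delta}{\delta}\log\log N-\log x+o(1),
\]
not $\log N-(1+\delta)\log\log N-\log x$ as you wrote; hence $(1-q_{\kappa_x(N)})^N\sim x\,N^{-1/\delta}(\log N)^{(1+\delta)/\delta}$, not $x(\log N)^{1+\delta}/N$. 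Second, the slope of the exponent at the endpoint is $\delta\alpha N\kappa_x^{-\delta}/\kappa_x\sim\delta\cdot(\log N/\delta)/\kappa_x=\log N/\kappa_x(N)$, not $\delta\log N/\kappa_x(N)$, so the integral contributes a factor $\kappa_x(N)/\log N$, not $\kappa_x(N)/(\delta\log N)$. With the corrected values the product is
\[
x\,N^{-1/\delta}(\log N)^{(1+\delta)/\delta}\cdot\frac{\kappa_x(N)}{\log N}\sim x\,(\alpha\delta)^{1/\delta},
\]
with all powers of $N$ and $\log N$ genuinely cancelling. If instead you keep your values, you get $\frac{x(\alpha\delta)^{1/\delta}}{\delta}(\log N)^{\delta-1/\delta}N^{1/\delta-1}\to 0$ for $\delta>1$, which is not the claimed limit. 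The paper sidesteps this bookkeeping by first proving the intermediate asymptotic $\E(W_N^{k(N)})\sim\frac{1}{\alpha\delta}\frac{k(N)^{1+\delta}}{N}\exp(-\alpha N k(N)^{-\delta})$ for any $k(N)\asymp(N/\log N)^{1/\delta}$ via an explicit change of variables, and only then substituting $k=\kappa_x(N)$; that separation of concerns makes the final cancellation routine and much less error-prone than linearizing the exponent on the fly.
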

\begin{proof}
For $k$, $N\in\N$
\begin{equation}\label{EW}
\E\left(W_N^k\right)=\sum_{i=1}^k (1-q_i)^N.
\end{equation}
For $0\leq x\leq 1$,
\[
0\leq e^{-Nx}-(1-x)^N\leq x_N(1-x_N)^{N-1},
\]
where $x_N$ is the unique solution to the equation $\exp(-Nx)=(1-x)^{N-1}$, since the function $x \to e^{-Nx}-(1-x)^N$ has a maximum at point $x_N$. It is 
easily seen that $Nx_N\leq 2$ (in fact $Nx_N\to 2$ as $N\to+\infty$), so that for $N\geq 1$ 
\begin{equation}\label{ineq}
\sup_{0\leq x\leq 1} \left|e^{-Nx}-(1-x)^N\right|\leq \frac{2}{N}.
\end{equation}
With this relation, we obtain 
\[
\left|\E\left(W_N^{k}\right) - \sum_{i=1}^{k} e^{-Nq_i}\right|\leq \frac{2k}{N},
\]
so that for $k=\kappa_x(N)$ and large $N$, $(1-q_i)^N$ can be replaced with  $\exp(-Nq_i)$ in the expression of $\E(W_N^k)$. 

For the sake of simplicity, we  assume that $q_i=\alpha/i^\delta$, for $i\geq 1$.  The general case
of a non-increasing sequence $(q_i)$ follows along the same lines  since the crucial relation below holds true
with a convenient function $q$.  One defines $q(x) = \alpha\min(x^{-\delta},1)$ for $x\geq
0$.
\[
\int_0^{k}e^{-Nq(u)}\, du \leq \sum_{i=1}^{k}e^{-Nq_i} \leq \int_1^{k+1}e^{-Nq(u)}\, du.
\]
The difference between these two integrals is bounded by $2\exp(-\alpha N/k^\delta)$.
Now take $k=k(N)$ with $ k(N)$ with the same order of magnitude as $(N/\log N)^{1/\delta}$, say, $k(N) \sim  A (N/\log N)^{1/\delta}$ for some $A>0$. We have
\[
\E\left(W_N^{k(N)}\right) = \int_1^{k(N)}e^{-Nq(u)}\, du + o(1).
\]
The right hand side of the above equation is given by
\begin{multline}\label{aux1}
\int_1^{k(N)}e^{-\alpha N u^{-\delta}}\, du  \\ = \frac{(\alpha N)^{1/\delta}}{\delta}\int_{\alpha N k(N)^{-\delta}}^{\alpha N}e^{-u}u^{-(\delta+1)/\delta}\, du.
\end{multline}
Now let $H(N)=\alpha N k(N)^{-\delta}$ and consider
\begin{align*}
e^{H(N)}&H(N)^{(1+\delta)/\delta}\int_{H(N)}^{\alpha N}e^{-u}u^{-(\delta+1)/\delta}\, du\\
&=\int_{H(N)}^{\alpha N}e^{-(u-H(N))}\left(\frac{H(N)}{u}\right)^{-(\delta+1)/\delta}\, du\\
&=\int_{1}^{\alpha N/H(N)}H(N)e^{-H(N)(u-1)}\frac{1}{u^{(\delta+1)/\delta}}\, du\\
&\sim \int_{0}^{+\infty}H(N)e^{-H(N)u}\frac{1}{(1+u)^{(\delta+1)/\delta}}\, du\sim 1,
\end{align*}
since $N/H(N)\to+\infty$ and $H(N)\to+\infty$ as $N\to+\infty$. Therefore, an equivalent expression of the integral in
the right hand side of Equation~\eqref{aux1} has been obtained. Gathering these results, we obtain
\begin{multline}\label{detequiv}
\E\left(W_N^{k(N)}\right) = 
\frac{(\alpha N)^{1/\delta}}{\delta}
e^{-H(N)}H(N)^{-(1+\delta)/\delta} + o(1)\\ \sim
\frac{1}{\alpha\delta}\frac{k(N)^{1+\delta}}{N}\exp\left(-\alpha N k(N)^{-\delta}\right).
\end{multline} 
Relation~\eqref{limW} is obtained by taking $k(N)=\kappa_x(N)$.
\end{proof}

The following proposition shows the equivalence of the
variance and the mean value of $W_N^{\kappa_x(N)}$ under a convenient scaling. This result is
crucial to prove the limit theorems of this section.
 
\begin{prop}\label{propvar}
Assume that the sequence $(q_i)$ is non-increasing. For $x>0$, let $\kappa_x$ be defined by Equation~\eqref{kappa}, then
\begin{equation}\label{eq0}
\lim_{N\to+\infty} \left.{\Var\left(W_N^{\kappa_x(N)}\right)}\right/{\E\left(W_N^{\kappa_x(N)}\right)}=1.
\end{equation}
\end{prop}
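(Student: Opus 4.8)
The plan is to expand the variance over the pair structure of the indicators $I_{N,i}=\ind{\eta_i^D(N)=0}$ and to show that, under the scaling $\kappa_x(N)$, the only surviving term is $\E(W_N^{\kappa_x(N)})$ itself. I will write $\kappa=\kappa_x(N)$ and, to fix ideas, $q_i=\alpha i^{-\delta}$ (the general non-increasing case being handled, exactly as in Proposition~\ref{propE}, with the comparison function $q(u)=\alpha\min(u^{-\delta},1)$), and set $H(N)=\alpha N\kappa^{-\delta}$; by~\eqref{kappa}, $H(N)\sim(\log N)/\delta$, so in particular $H(N)\to+\infty$. Since the vector $(\eta_1^D(N),\eta_2^D(N),\dots)$ is multinomial with parameters $N$ and $(q_n)$, one has $\P(\eta_i^D(N)=0)=(1-q_i)^N$ and $\P(\eta_i^D(N)=\eta_j^D(N)=0)=(1-q_i-q_j)^N$ for $i\neq j$ (note $q_i+q_j<1$ because $\sum_n q_n=1$), whence
\begin{equation}\label{eq:vardecomp}
\Var\!\left(W_N^{\kappa}\right)=\E\!\left(W_N^{\kappa}\right)-A_N-B_N,
\end{equation}
where $A_N=\sum_{i=1}^{\kappa}(1-q_i)^{2N}$ and $B_N=\sum_{1\le i\neq j\le\kappa}[(1-q_i)^N(1-q_j)^N-(1-q_i-q_j)^N]$ are both non-negative, the second because $(1-q_i)(1-q_j)\ge 1-q_i-q_j$ (equivalently, the covariances of the $I_{N,i}$ are non-positive). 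Since Proposition~\ref{propE} gives $\E(W_N^{\kappa})\to(\alpha\delta)^{1/\delta}x\in(0,+\infty)$, it is enough to prove $A_N\to 0$ and $B_N\to 0$.

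For $A_N$, I would rerun the Laplace/incomplete-Gamma estimates from the proof of Proposition~\ref{propE} with $N$ replaced by $2N$. Replacing $(1-q_i)^{2N}$ by $e^{-2Nq_i}$ (the total error being at most $\kappa/N\to0$ by~\eqref{ineq} applied with $2N$) gives $A_N\sim\int_1^{\kappa}e^{-2Nq(u)}\,du$, which, by the same computation leading to~\eqref{detequiv} but now with exponent $2N$, equals $\sim\frac{\kappa^{1+\delta}}{2\alpha\delta N}e^{-2H(N)}$. Comparing with the equivalent $\E(W_N^{\kappa})\sim\frac{\kappa^{1+\delta}}{\alpha\delta N}e^{-H(N)}$ of~\eqref{detequiv}, this reads $A_N\sim\frac{1}{2}e^{-H(N)}\,\E(W_N^{\kappa})$; since $H(N)\to+\infty$ while $\E(W_N^{\kappa})$ stays bounded, $A_N\to0$.

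For $B_N$, apply the elementary inequality $a^N-b^N\le N(a-b)a^{N-1}$ with $a=(1-q_i)(1-q_j)$ and $b=1-q_i-q_j$ (so $0\le b\le a\le1$ and $a-b=q_iq_j$): one gets $0\le(1-q_i)^N(1-q_j)^N-(1-q_i-q_j)^N\le Nq_iq_j(1-q_i)^{N-1}(1-q_j)^{N-1}$, hence
\begin{equation}\label{eq:Bbound}
0\le B_N\le N\left(\sum_{i=1}^{\kappa}q_i(1-q_i)^{N-1}\right)^{2}.
\end{equation}
It then remains to show $\sum_{i=1}^{\kappa}q_i(1-q_i)^{N-1}=O((\log N)/N)$. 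Bounding $(1-q_i)^{N-1}\le e^{-(N-1)q_i}$ and comparing the resulting sum with $\int_1^{\kappa+1}\alpha u^{-\delta}e^{-(N-1)\alpha u^{-\delta}}\,du$ (the integrand being increasing over the relevant range, since $(N-1)q_i\ge1$ there for $N$ large), the substitution $s=(N-1)\alpha u^{-\delta}$ together with the standard tail estimate $\int_{H}^{+\infty}s^{-1/\delta}e^{-s}\,ds\sim H^{-1/\delta}e^{-H}$ yields $\sum_{i=1}^{\kappa}q_i(1-q_i)^{N-1}=O\!\left(\frac{(\alpha N)^{1/\delta}}{N}H(N)^{-1/\delta}e^{-H(N)}\right)=O\!\left(\frac{H(N)}{N}\E(W_N^{\kappa})\right)=O\!\left(\frac{\log N}{N}\right)$. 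Plugging this into~\eqref{eq:Bbound} gives $B_N=O((\log N)^2/N)\to0$.

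Combining these facts with~\eqref{eq:vardecomp} yields $\Var(W_N^{\kappa})=\E(W_N^{\kappa})-o(1)$, and since $\E(W_N^{\kappa})$ has a positive finite limit the ratio in~\eqref{eq0} tends to $1$. The only genuinely non-routine part is the reuse of the incomplete-Gamma asymptotics of Proposition~\ref{propE} for the two auxiliary sums $A_N$ and $\sum_iq_i(1-q_i)^{N-1}$: the point to verify carefully is that passing from exponent $N$ to $2N$ costs a factor $e^{-H(N)}\to0$, while inserting the weight $q_i$ (of order $H(N)/N$ on the range of indices that matters) costs a factor $H(N)/N$, and that under the critical scaling $\kappa_x(N)$ either factor already forces the correction to vanish. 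The general non-increasing $(q_i)$ is accommodated exactly as in Proposition~\ref{propE}, and the variant ``$\le1$ ball'' instead of ``$0$ balls'' is identical after the obvious changes.
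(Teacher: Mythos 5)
Your proof is correct and follows essentially the same route as the paper's. You use the same variance decomposition (the paper writes it as subtracting $\sum_{i,j}[(1-q_i-q_j+q_iq_j)^N-(1-q_i-q_j)^N]$ and $\sum_i(1-2q_i)^N$, which after splitting off the diagonal is identical to your $A_N+B_N$), the same elementary inequality $a^N-b^N\le N(a-b)a^{N-1}$ to control the off-diagonal covariances, the same reduction to $\sum_iNq_i(1-q_i)^{N-1}=O(\log N)$, and the same reuse of the incomplete-Gamma asymptotics from Proposition~\ref{propE} (with $N$ replaced by $2N$) to kill the diagonal term.
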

\begin{proof}
For $k\geq 1$, by using Equation~\eqref{EW} (which does not depend on $\alpha$).
\[
(\E[W_N^k])^2=\sum_{1\leq i, j\leq  k} (1-q_i-q_j+q_iq_j)^N,
\]
and
\[
\E[(W_N^k)^2]=\E[W_N^k]+\sum_{1\leq i\not= j\leq  k} (1-q_i-q_j)^N,
\]
so that, to prove the equivalence of $\Var(W_N^k)$ and $\E(W_N^k)$, it is sufficient  to show
that the quantities
\[
\sum_{1\leq i, j\leq  k} \left[(1-q_i-q_j+q_iq_j)^N-(1-q_i-q_j)^N \right]
\]
and
\[
\sum_{i=1}^{k} (1-2q_i)^N
\]
are negligible with  respect to $\E(W_N^k)$. Since we consider  $k(N) = \kappa_x(N)$, this
amounts to show that  these quantities are $o(1)$ by Proposition~\ref{propvar}. The second term  is the expected number
of   empty  urns  for   the  distribution   $(\tilde{q}_i)$  such   that  $\tilde{q}_i\sim
2\alpha/i^\delta$. Estimate~\eqref{detequiv} shows that
\begin{align*}
\sum_{i=1}^{\kappa_x(N)}(1-2q_i)^N & \sim \frac{1}{2\alpha \delta}\frac{\kappa_x(N)^{1+\delta}}{N} \exp\left(-2\alpha N \kappa_x(N)^{-\delta}\right) \\
&= o\left(\E W_N^{\kappa_x(N)}\right).
\end{align*}

By using the fact that for $a \geq b\geq 0$, $a^N-b^N\leq N(a-b)a^{N-1}$, the second term satisfies
\begin{multline}\label{aux2}
\sum_{1\leq i, j\leq  k} \left[(1-q_i-q_j+q_iq_j)^N-(1-q_i-q_j)^N \right] \\
\leq N \sum_{1\leq i, j\leq  k}q_iq_j (1-q_i-q_j+q_iq_j)^{N-1} \\
= \frac{1}{N} \left(\sum_{i=1}^{k}Nq_i (1-q_i)^{N-1} \right)^2.
\end{multline}
By using a similar method as in the proof of Proposition~\ref{propE}, we obtain the equivalence
\begin{align*}
\sum_{i=1}^{k(N)}Nq_i (1-q_i)^{N-1} &\sim \int_1^{k(N)} Nq(u)e^{-Nq(u)}\, du\\
&\sim (\alpha\delta)^{1/\delta} \alpha x \frac{N}{\kappa_x(N)^{\delta}}= (\alpha\delta)^{1/\delta} x  \log N.
\end{align*}
This equivalence together with Equation~\eqref{aux2} complete  the proof of the
proposition. 
\end{proof} 

\begin{thm}\label{thm:determinist}
Let  $(q_n)$ be  a non-increasing sequence satisfying Relation~\eqref{eqdet}. For $x>0$ and $N\in\N$,  set
\[
\kappa_x(N)= \left \lfloor \left( \alpha\delta \frac{N}{\log N}\right)^{1/\delta}
\left(1+\frac{1+\delta}{\delta}\frac{\log\log N}{\log N}+\frac{\log x}{\log N}\right)\right\rfloor.
\]
When  $N$ goes to infinity, the variable $W_N^{\kappa_x(N)}$ converges in distribution
to a Poisson random variable with parameter $(\alpha\delta)^{1/\delta}x$.

The index $\nu^D(N)$ of the first empty urn defined by Equation~\eqref{nu} is such that
the variable
\begin{equation}\label{eq2}
 \frac{(\log N)^{(1+\delta)/\delta}}{(\alpha\delta N)^{1/\delta}}\nu^D(N)- 
\log N-\frac{1+\delta}{\delta}\log\log N
\end{equation}
converges in distribution to a random variable $Y$ defined by 
$$
\P(Y\geq x)=\exp\left(-(\alpha\delta)^{1/\delta} e^x\right),\quad x\in\R.
$$  
\end{thm}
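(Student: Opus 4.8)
The plan is to first establish the Poisson limit for $W_N^{\kappa_x(N)}$ by the method of (factorial) moments --- the first two moment estimates being already supplied by Propositions~\ref{propE} and~\ref{propvar} --- and then to deduce the statement on $\nu^D(N)$ from the elementary identity $\{\nu^D(N)>k\}=\{W_N^k=0\}$, noting that the normalisation appearing in~\eqref{eq2} is exactly the one that produces the threshold $\kappa_{e^x}(N)$ of~\eqref{kappa}.

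For the first point, write $W_N^k=\sum_{i=1}^k\ind{\eta_i^D(N)=0}$ and use the independence of the $N$ ball placements to get, for every integer $r\geq1$ and every $k\geq r$,
\[
\E\Bigl[W_N^k\bigl(W_N^k-1\bigr)\cdots\bigl(W_N^k-r+1\bigr)\Bigr]=\sum_{\substack{1\leq i_1,\dots,i_r\leq k\\ \text{pairwise distinct}}}\Bigl(1-q_{i_1}-\cdots-q_{i_r}\Bigr)^N .
\]
We must show that, for $k=\kappa_x(N)$, this converges to $\bigl((\alpha\delta)^{1/\delta}x\bigr)^r$ for every $r$; since the Poisson law is determined by its moments, this gives the claimed convergence. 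As in the proof of Proposition~\ref{propE}, one first replaces $\bigl(1-\sum_j q_{i_j}\bigr)^N$ by $\exp\bigl(-N\sum_j q_{i_j}\bigr)=\prod_j e^{-Nq_{i_j}}$. The crude bound~\eqref{ineq} no longer suffices here (it controls the $O(k^r)$ tuples only when $r<\delta$), so one uses the sharper estimate $0\leq e^{-Ns}-(1-s)^N\leq 2Ns^2e^{-Ns}$ for $0\leq s\leq1/2$, together with $\bigl(\sum_j q_{i_j}\bigr)^2\leq r\sum_j q_{i_j}^2$ and $e^{-N\sum_j q_{i_j}}\leq e^{-Nq_{i_\ell}}$: the replacement error is then at most a constant times $\bigl(\sum_{i\leq k}Nq_i^2e^{-Nq_i}\bigr)\bigl(\sum_{i\leq k}e^{-Nq_i}\bigr)^{r-1}$ (the finitely many tuples with $\sum_j q_{i_j}>1/2$ contributing $O(k^re^{-N/2})=o(1)$). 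For $k=\kappa_x(N)$ the second factor tends to $(\alpha\delta)^{1/\delta}x$ by Proposition~\ref{propE}, while $\sum_{i\leq\kappa_x(N)}Nq_i^2e^{-Nq_i}\to0$ (on this range the summand is increasing in $i$, hence bounded by $\kappa_x(N)$ times its value at $i=\kappa_x(N)$, which is $o(1)$ by the same integral comparison as in Propositions~\ref{propE} and~\ref{propvar}). Finally, $\sum_{\text{distinct}}\prod_j e^{-Nq_{i_j}}$ differs from $\bigl(\sum_{i\leq k}e^{-Nq_i}\bigr)^r$ only by the tuples carrying a repeated index, which are dominated by $\sum_{i\leq k}e^{-2Nq_i}$ times a bounded power of $\sum_{i\leq k}e^{-Nq_i}$; and $\sum_{i\leq\kappa_x(N)}e^{-2Nq_i}\to0$ by estimate~\eqref{detequiv} applied with $2N$ balls, exactly as in the proof of Proposition~\ref{propvar}. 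Hence the $r$-th factorial moment tends to $\bigl((\alpha\delta)^{1/\delta}x\bigr)^r$ and $W_N^{\kappa_x(N)}$ converges in law to a Poisson variable with parameter $(\alpha\delta)^{1/\delta}x$. (For a general non-increasing $(q_i)$ one reduces to the pure power case $q_i=\alpha/i^\delta$ as in Proposition~\ref{propE}.)

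For the second point, since $\nu^D(N)=\inf\{i\geq1:\eta_i^D(N)=0\}$ one has $\{\nu^D(N)>k\}=\{W_N^k=0\}$, whence $\P\bigl(\nu^D(N)>\kappa_y(N)\bigr)=\P\bigl(W_N^{\kappa_y(N)}=0\bigr)\to e^{-(\alpha\delta)^{1/\delta}y}$ for every $y>0$ by the first point. A direct computation gives, for $x\in\R$,
\[
\frac{(\alpha\delta N)^{1/\delta}}{(\log N)^{(1+\delta)/\delta}}\Bigl(\log N+\frac{1+\delta}{\delta}\log\log N+x\Bigr)=\Bigl(\frac{\alpha\delta N}{\log N}\Bigr)^{1/\delta}\Bigl[1+\frac{1+\delta}{\delta}\frac{\log\log N}{\log N}+\frac{x}{\log N}\Bigr],
\]
which is precisely the value of $\kappa_{e^x}(N)$ in~\eqref{kappa} before taking the integer part; call it $m_N(x)$. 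Denoting by $Z_N$ the variable in~\eqref{eq2}, one has $\{Z_N\geq x\}=\{\nu^D(N)\geq m_N(x)\}$, and since $\nu^D(N)$ is integer-valued this equals $\{\nu^D(N)>\lfloor m_N(x)\rfloor\}=\{\nu^D(N)>\kappa_{e^x}(N)\}=\{W_N^{\kappa_{e^x}(N)}=0\}$ whenever $m_N(x)\notin\Z$; hence $\P(Z_N\geq x)\to\exp\bigl(-(\alpha\delta)^{1/\delta}e^x\bigr)$. The exceptional values $m_N(x)\in\Z$ cause no problem: both $x\mapsto\P(Z_N\geq x)$ and the limit are non-increasing and the limit is continuous, so convergence holds for every $x\in\R$. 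This is exactly $\P(Y\geq x)$, which proves the theorem.

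The main obstacle is the first point, and more precisely the \emph{uniform} control of the two approximation errors over the whole index range $i\leq\kappa_x(N)$: everything reduces to showing that $\sum_{i\leq\kappa_x(N)}Nq_i^2e^{-Nq_i}=o(1)$ and $\sum_{i\leq\kappa_x(N)}e^{-2Nq_i}=o(1)$, which depend in an essential way on the exact form of the threshold~\eqref{kappa} (notably the $\log\log N$ correction) and require the same kind of sharp integral estimates already used for the mean and the variance. The remaining ingredients --- the crude bound for tuples with $\sum_j q_{i_j}>1/2$, the floor/ceiling bookkeeping, the measure-zero set of ``bad'' $x$, and the reduction from a general non-increasing $(q_i)$ to the power case --- are routine.
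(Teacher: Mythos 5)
Your proof is correct, but it takes a genuinely different route from the paper's. For the Poisson convergence of $W_N^{\kappa_x(N)}$ you use the method of (factorial) moments: you compute the $r$-th factorial moment as $\sum_{\text{distinct}}\bigl(1-\sum_j q_{i_j}\bigr)^N$, replace each term by $\prod_j e^{-Nq_{i_j}}$ via the refined estimate $0\leq e^{-Ns}-(1-s)^N\leq 2Ns^2e^{-Ns}$, and then strip off the constraint of distinctness; the two error terms reduce to $\sum_{i\leq\kappa_x(N)}Nq_i^2e^{-Nq_i}\to0$ and $\sum_{i\leq\kappa_x(N)}e^{-2Nq_i}\to0$, both of which follow from the same integral estimates as in Propositions~\ref{propE} and~\ref{propvar}, so the $r$-th factorial moment tends to $\bigl((\alpha\delta)^{1/\delta}x\bigr)^r$. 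The paper instead invokes the Chen--Stein method: it observes, via a re-throwing coupling, that the indicators $I_{N,i}$ are negatively correlated, and then applies Corollary~2.C.2 of Barbour \emph{et al.}~\cite{Barbour:01}, which bounds the total-variation distance to the Poisson law directly by $1-\Var(W_N^{k})/\E(W_N^{k})$, so that Propositions~\ref{propE} and~\ref{propvar} immediately finish the job. The Chen--Stein route is considerably shorter once the negative correlation is in hand (it reduces everything to the first two moments and moreover yields a quantitative rate), whereas your moment approach is more elementary and self-contained — it does not require knowing that the indicators are negatively associated or appealing to the Poisson-approximation machinery — at the cost of estimating all factorial moments and tracking two separate $o(1)$ error terms. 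The deduction of the limit law for $\nu^D(N)$ from $\P(W_N^k=0)=\P(\nu^D(N)>k)$ is the same in both, though you spell out the floor and monotonicity details that the paper leaves implicit.
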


\begin{proof}
Chen-Stein's  method is  the basic  tool in  the proof  of the  theorem.   See Barbour
\etal~\cite{Barbour:01} for a detailed  presentation of this powerful method.  Let $N$,
and $k$ be  in $\N$ and  $1\leq i_0\leq  k$. The variable $W_N^k$ conditioned on the event
$\{I_{N,i_0}=1\}$ has the same distribution as the number of empty urns when the balls in
the $i_0$-th urn are thrown again until the $i_0$-th urn is empty. It follows that the
number of balls in any other urn is larger than in the case when they are assigned at
first draw. One deduces that for $i\neq i_0$,
\[
\P\left(I_{N,i} =1\mid I_{N,i_0}=1\right) \leq \P\left(I_{N,i} =1\right).
\]
The variables $(I_{N,i},1\leq i\leq k)$ are therefore negatively correlated, see Barbour
\etal~\cite{Barbour:01}. Then, by \cite[Corollary 2.C.2]{Barbour:01}, the
following relation holds,
\begin{multline*}
\sum_{p\geq 0}\left|\P(W_n^k=p)-\frac{\E\left(W_N^k\right)^p}{p!}e^{-\E\left(W_N^k\right)}\right| \\ \leq
1-\left.{\Var\left(W_N^{k}\right)}\right/{\E\left(W_N^{k}\right)}.
\end{multline*}
By taking  $k=\kappa_x(N)$ and by using Propositions~\ref{propE} and~\ref{propvar},
we obtain  the convergence in distribution of $W_N^{\kappa_x(N)}$ to a Poisson distribution with parameter $(\alpha \delta)^{1/\delta}x$.
The last part of the theorem is a simple consequence of the identity  $\P(W_N^k=0)=\P(\nu^D(N)>k)$. 
\end{proof}

The  convergence  in  distribution  of   $\nu^D(N)$  has  been  proved  by  Cs{\'a}ki  and
F{\"o}ldes~\cite{Starski}  with a  different  method.  Our result  gives  a more  accurate
description of the location of empty urns (and not only the first one) near the index $\kappa_x(N)$. 

The following corollary is a straightforward application of the detailed asymptotics
obtained in the above theorem. 
\begin{corol}[Cutoff phenomenon] Under the assumption of Theorem~\ref{thm:determinist}, if 
$$k(N)=\left({N}/{\log N}\right)^{1/\delta},$$
then, as $N$ goes to infinity, the following convergence in distribution holds: For $\beta>0$, 
\[
W_N^{\beta k(N)}\longrightarrow
\begin{cases}
+\infty& \text{ if } \beta >(\alpha\delta)^{1/\delta},\\
0& \text{ if } \beta <(\alpha\delta)^{1/\delta}.
\end{cases}
\]
\end{corol}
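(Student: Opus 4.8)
The result is meant to follow directly from Theorem~\ref{thm:determinist} together with the monotonicity of $k\mapsto W_N^k$. The plan is to sandwich $W_N^{\beta k(N)}$ between two of the variables $W_N^{\kappa_x(N)}$ whose limiting law is identified in that theorem, and then to let the auxiliary parameter $x$ tend to $0$ or to $+\infty$. The starting remark is that, straight from the definition~\eqref{kappa} of $\kappa_x(N)$, one has $(\alpha\delta N/\log N)^{1/\delta}=(\alpha\delta)^{1/\delta}k(N)$ while the bracketed correction factor tends to $1$; hence $\kappa_x(N)=(\alpha\delta)^{1/\delta}k(N)\,[1+o(1)]$ and, in particular, $\kappa_x(N)/k(N)\to(\alpha\delta)^{1/\delta}$ for every fixed $x>0$.

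For $\beta<(\alpha\delta)^{1/\delta}$, I would fix $x>0$ and use the remark above to get $\lfloor\beta k(N)\rfloor\leq\kappa_x(N)$ for all $N$ large enough, so that $W_N^{\beta k(N)}\leq W_N^{\kappa_x(N)}$; since by Theorem~\ref{thm:determinist} the latter converges in distribution to a Poisson variable with parameter $(\alpha\delta)^{1/\delta}x$, this gives
\[
\limsup_{N\to+\infty}\P\left(W_N^{\beta k(N)}\geq 1\right)\leq\lim_{N\to+\infty}\P\left(W_N^{\kappa_x(N)}\geq 1\right)=1-\exp\left(-(\alpha\delta)^{1/\delta}x\right).
\]
As this holds for every $x>0$, letting $x\downarrow 0$ yields $\P(W_N^{\beta k(N)}\geq 1)\to 0$, which is precisely the announced convergence of the $\N$-valued variable $W_N^{\beta k(N)}$ to $0$.

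For $\beta>(\alpha\delta)^{1/\delta}$, the symmetric argument applies: for fixed $x>0$ one has $\lfloor\beta k(N)\rfloor\geq\kappa_x(N)$ for $N$ large, hence $W_N^{\beta k(N)}\geq W_N^{\kappa_x(N)}$, and for every fixed integer $M\geq 0$ Theorem~\ref{thm:determinist} gives
\[
\limsup_{N\to+\infty}\P\left(W_N^{\beta k(N)}\leq M\right)\leq e^{-(\alpha\delta)^{1/\delta}x}\sum_{j=0}^{M}\frac{\bigl((\alpha\delta)^{1/\delta}x\bigr)^{j}}{j!}.
\]
Letting $x\uparrow+\infty$ makes the right-hand side vanish, so $\P(W_N^{\beta k(N)}\leq M)\to 0$ for every $M$, that is $W_N^{\beta k(N)}\to+\infty$ in distribution. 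A self-contained alternative would be a first/second moment argument: inserting $k(N)=\beta(N/\log N)^{1/\delta}$ into Estimate~\eqref{detequiv} yields $\E(W_N^{\beta k(N)})\sim\frac{\beta^{1+\delta}}{\alpha\delta}\,N^{1/\delta-\alpha\beta^{-\delta}}(\log N)^{-(1+\delta)/\delta}$, whose exponent of $N$ is positive exactly when $\beta>(\alpha\delta)^{1/\delta}$, so $\E(W_N^{\beta k(N)})\to+\infty$; combined with $\Var(W_N^{\beta k(N)})\leq\E(W_N^{\beta k(N)})$ — a consequence of the negative correlation of the indicators $(I_{N,i})$ recorded in the proof of Theorem~\ref{thm:determinist} — Chebyshev's inequality then forces $W_N^{\beta k(N)}$ to concentrate around its diverging mean.

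I do not expect a genuine obstacle here: the statement is a corollary precisely because all the analytic content already sits in Propositions~\ref{propE} and~\ref{propvar} and in Theorem~\ref{thm:determinist}. The only points needing a line of care are the integer-part comparisons between $\lfloor\beta k(N)\rfloor$ and $\kappa_x(N)$ — harmless, since $\kappa_x(N)$ agrees with $(\alpha\delta)^{1/\delta}k(N)$ up to a factor $1+o(1)$ whereas $\beta k(N)$ differs from it by the fixed factor $\beta/(\alpha\delta)^{1/\delta}\neq 1$ while $k(N)\to+\infty$ — and the order in which the two limits are taken (first $N\to+\infty$, then $x\to 0$ or $x\to+\infty$), which is legitimate because each displayed bound holds for every individual $x$.
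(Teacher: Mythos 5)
Your proposal is correct and is exactly the ``straightforward application'' the paper has in mind: one compares $\beta k(N)$ with $\kappa_x(N)\sim(\alpha\delta)^{1/\delta}k(N)$, uses the monotonicity of $k\mapsto W_N^k$ to sandwich, invokes the Poisson limit of Theorem~\ref{thm:determinist}, and then lets $x\downarrow 0$ or $x\uparrow\infty$. The alternative first/second moment argument you sketch is also sound (the exponent computation from~\eqref{detequiv} and the bound $\Var\leq\E$ via negative correlation are both correct) and is a nice self-contained fallback, though the paper does not need it.
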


So far, only indexes of empty urns have been considered. The result below shows that the first empty urn happens at a time of the order of $\log N$. Remembering the approximation of the peer to peer system, it suggests that the time the system
begins to serve quickly the incoming peers should be of the same order.
 
\begin{corol}[First Empty Urn]\label{corolhit}
Let
\begin{equation}\label{TD}
T^D(N) = T_{\nu^D(N)} =\sum_{k=1}^{\nu^D(N)} \frac{E_k^1}{k}.
\end{equation}
Under the assumptions of  Theorem~\ref{thm:determinist},   the quantity
$$\delta T^D(N)-\log N+ \log\log N-\log(\alpha\delta)$$ converges in distribution to 
$T_\infty$, where $T_\infty$ is the random variable defined in
Proposition~\ref{prop:asymptotic}. 
\end{corol}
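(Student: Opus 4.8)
\emph{Strategy.} The plan is to push the almost sure convergence $T_n-\log n\to T_\infty$ of Proposition~\ref{prop:asymptotic} through the random index $\nu^D(N)$, and then to replace $\log\nu^D(N)$ by the precise deterministic expansion supplied by Theorem~\ref{thm:determinist}.

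\emph{A random-index transfer.} The crucial point is that in the deterministic urn and ball problem the urn boundaries $t_n=\E T_n=H_n$ are non-random, so $\nu^D(N)$ is a measurable function of the ball positions $(E^\rho_i)_{1\le i\le N}$ alone and is therefore independent of the sequence $(E^1_k)_{k\ge1}$ that defines $(T_n)$ through~\eqref{eq:T}. Moreover $\nu^D(N)\to\infty$ in probability, since for every fixed $i$ a union bound gives $\P(\nu^D(N)\le i)\le\sum_{j\le i}(1-q_j)^N\to0$. Hence, for a bounded continuous $h$, conditioning on the $\sigma$-field generated by the balls yields $\E[h(T_{\nu^D(N)}-\log\nu^D(N))]=\E[\phi(\nu^D(N))]$ with $\phi(n):=\E[h(T_n-\log n)]$; Proposition~\ref{prop:asymptotic} gives $\phi(n)\to\E[h(T_\infty)]$ as $n\to\infty$, and since $\phi$ is bounded and $\nu^D(N)\to\infty$ in probability, $\E[\phi(\nu^D(N))]\to\E[h(T_\infty)]$. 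Thus $T_{\nu^D(N)}-\log\nu^D(N)$ converges in distribution to $T_\infty$. (Equivalently, split $T_{\nu^D(N)}-\log\nu^D(N)=M_{\nu^D(N)}+(H_{\nu^D(N)}-\log\nu^D(N))$: the deterministic part tends to Euler's constant while $M_{\nu^D(N)}$ converges in law to $M_\infty$ by the same argument applied to the $L_2$-bounded martingale $(M_n)$ of Proposition~\ref{prop:asymptotic}.)

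\emph{Expanding $\log\nu^D(N)$.} By Theorem~\ref{thm:determinist}, the quantity $(\log N)^{(1+\delta)/\delta}(\alpha\delta N)^{-1/\delta}\nu^D(N)-\log N-\tfrac{1+\delta}{\delta}\log\log N$ converges in distribution, hence equals $\log N+\tfrac{1+\delta}{\delta}\log\log N+O_{\P}(1)$. Taking logarithms and using $\log\log N/\log N\to0$,
\[
\delta\log\nu^D(N)=\log(\alpha\delta N)-(1+\delta)\log\log N+\delta\log\!\bigl(\log N+O_{\P}(\log\log N)\bigr)=\log N-\log\log N+\log(\alpha\delta)+o_{\P}(1),
\]
the second equality using $\delta\log(\log N+O_{\P}(\log\log N))=\delta\log\log N+o_{\P}(1)$ together with the cancellation $-(1+\delta)\log\log N+\delta\log\log N=-\log\log N$.

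\emph{Conclusion and main obstacle.} Writing
\[
\delta T^D(N)-\log N+\log\log N-\log(\alpha\delta)=\delta\bigl(T_{\nu^D(N)}-\log\nu^D(N)\bigr)+\bigl(\delta\log\nu^D(N)-\log N+\log\log N-\log(\alpha\delta)\bigr),
\]
the first summand converges in distribution by the transfer step and the second tends to $0$ in probability by the expansion step, so Slutsky's theorem gives that the left-hand side converges in distribution, the limiting law being obtained by tracking the constants above; this is the content of the corollary. The delicate point is the expansion of $\log\nu^D(N)$: one must verify that neither the $\log\log N$ correction nor the $O_{\P}(1)$ fluctuation of the location of the first empty urn survives once logarithms are taken, and it is precisely the fine form of $\kappa_x(N)$ in Theorem~\ref{thm:determinist} that makes this cancellation work. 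The transfer step is otherwise routine, its only subtlety being the independence between $\nu^D(N)$ and $(E^1_k)$, which rests on the urns of the deterministic problem being non-random.
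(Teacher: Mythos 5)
Your proof is correct and takes essentially the same route as the paper: expand $\log\nu^D(N)$ using the precise form of $\kappa_x(N)$ in Theorem~\ref{thm:determinist}, transfer the almost sure convergence $T_n-\log n\to T_\infty$ of Proposition~\ref{prop:asymptotic} through the random index $\nu^D(N)$ (your explicit use of the independence of $\nu^D(N)$ from $(E^1_k)$, together with $\nu^D(N)\to\infty$ in probability, makes rigorous a step the paper's terse proof leaves to the reader), and conclude via Slutsky. One remark: if you actually carry your constant-tracking to the end, the first summand $\delta\bigl(T_{\nu^D(N)}-\log\nu^D(N)\bigr)$ converges in distribution to $\delta T_\infty$, so the displayed quantity converges to $\delta T_\infty$ and not to $T_\infty$; this looks like a factor-of-$\delta$ slip in the corollary's statement (equivalently, it is $T^D(N)-\frac{1}{\delta}\bigl(\log N-\log\log N+\log(\alpha\delta)\bigr)$ that converges to $T_\infty$), and both your sketch and the paper's own proof stop just short of the explicit identification that would expose it.
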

\begin{proof}
If $V_N$ is the variable defined by Expression~\eqref{eq2}, then 
\begin{multline*}
\delta \log \nu^D(N) -\log (N)+\log\log N-\log(\alpha\delta)\\= \delta \log\left(\frac{1}{\log N}\left(V_N+\log N+\frac{1+\delta}{\delta}
\log\log N\right)\right).
\end{multline*}
Since by Theorem~\ref{thm:determinist} the sequence $(V_n)$ converges in distribution, it
implies that the right  hand side of the above expression converges in distribution to
$0$. Proposition~\ref{prop:asymptotic} shows that 
\[
E_1^1+\frac{E_2^1}{2}+\cdots+\frac{E_n^1}{n}-\log n
\]
converges almost surely to $T_\infty$.
\end{proof}

\section{Random Problem}\label{sec:mathr}
For the random model,  the probability $P_n$ of selecting the $n$-th urn is given by
Equation~\eqref{RD} of Proposition~\ref{prop:asymptotic}.  In the (almost sure) limit as
$n$ goes to infinity, $X_n\sim X_\infty$ and in distribution, $Z_n$ is asymptotically an
exponentially distributed random variable with parameter $1$. The sequence $(P_n)$ can be
approximated by  $$\left(\frac{\rho}{n^{\rho+1}} X_\infty E_n^1\right),$$ where $(E_n^1)$
are i.i.d.\ exponential variables with unit means. 

In spite of the fact that the decay  of $P_n$ follows a power law, the random factor plays
an important  role. This factor  is composed of  two variables, one (namely $X_\infty$)  is
fixed once for all  and the other (namely $Z_n$)  changes for every  urn. The fact that $Z_n$,
related to  the ``width''  of the  $n$-th urn, can  be arbitrarily  small with  a positive
probability suggests that the index $\nu^R$ of  the first empty urn should be smaller than
the corresponding  quantity for the deterministic case.   This is indeed true but the
situation in  this case  is much  more complex to  analyze. The  complete analysis  of the
random case is given in \cite{mathpaper}, and only sketches of proof are given for Proposition~\ref{prop:process} and Theorem~\ref{thm:random} in the present paper. It  must be noticed that a similar problem where
$X_\infty$  and the  sequence  $(E_n^1,  n \geq  1)$  are independent  is  fairly easy  to
solve. However  here, these random variables  are dependent, and  this dependency requires
quite technical probabilistic tools.

To derive  asymptotic results for $\nu^R$,  as in the previous section, the asymptotic
behavior of the random variable $W_N^k$ defined by 
\[
W_N^k = \sum_{i=1}^k I_{N,i} \textrm{ with } I_{N,i} = \ind{\eta_i^R(N) = 0}.
\]
is investigated. Although in the deterministic case, Chen-Stein's method makes it possible
to reduce the analysis  of $W_N^k$ to its first and second moments,  this is no longer the
case for the random problem.  Indeed, because of the variability of the urns sizes, the random
variables $(I_{N,i}, 1 \leq i \leq k)$ are no longer negatively correlated.  Moreover, the
ratio of the expected value to the variance of $W_N^{k(N)}$ does not converge to $1$ for a
convenient  sequence $(k(N))$  as  in the  deterministic case  (Proposition~\ref{propvar}),
which suggests that if a limit in distribution exists, it cannot be Poisson.

As  was  pointed  out in Hwang and Janson~\cite{Hwang07:0}, the sequence $(NP_i,  1 \leq i \leq k)$ 
plays  a central role in the limiting behavior of $(W_N^k)$. The following technical
proposition gives a result on the asymptotic behavior of this sequence. It is important
since it introduces the scale  $N^{1/(\rho+2)}$ which turns out to be the correct scaling
for the variable $\nu^R(N)$; see \cite{mathpaper} for the proof. 
\begin{prop} \label{prop:process}
Let $x > 0$. When $N$ goes to infinity, the random sequence $(NP_i, 1 \leq i \leq x
N^{1/(\rho+2)})$ converges in distribution to a doubly stochastic Poisson process with a random intensity
$x^{\rho+2}\big(X_\infty\rho(\rho+2)\big)^{-1}$. 
\end{prop}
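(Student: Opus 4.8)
The plan is to identify the random sequence in the statement with the point process $\Xi_N=\sum_{i=1}^{\lfloor xN^{1/(\rho+2)}\rfloor}\delta_{NP_i}$ on $(0,\infty)$ and to prove that it converges in distribution to the Cox process (doubly stochastic Poisson process) directed by the random measure $\Lambda(dt)=x^{\rho+2}(X_\infty\rho(\rho+2))^{-1}\,dt$, by appealing to the standard criterion for convergence of adapted arrays of point masses to Cox processes: uniform asymptotic negligibility of the array together with the convergence in probability of the associated sequence of predictable intensity measures to $\Lambda$; the precise reference is given in \cite{mathpaper}. The structural fact that makes this work is that, with respect to the natural filtration $\mathcal{F}_j=\sigma(E_1^1,\dots,E_j^1)$, the variable $X_{i-1}=i^{\rho}e^{-\rho T_{i-1}}$ is $\mathcal{F}_{i-1}$-measurable whereas $E_i^1$, and hence $Z_i$ and $P_i$, is independent of $\mathcal{F}_{i-1}$. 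One therefore never conditions on $X_\infty$ itself --- which would destroy the tractable structure of the sequence $(E_n^1)$, the very difficulty flagged after Theorem~\ref{thm:random} --- but recovers $X_\infty$ only as the almost sure limit of the predictable intensities.

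The first step is to replace $NP_i=(N\rho/i^{\rho+1})X_{i-1}Z_i$ by $N\widetilde P_i=(N\rho/i^{\rho+1})X_{i-1}E_i^1$. From the elementary inequality $0\le E_i^1-Z_i\le (\rho/2i)(E_i^1)^2$ and the fact, contained in Proposition~\ref{prop:asymptotic}, that $X_{i-1}=e^{-\rho(T_{i-1}-\log i)}$ is almost surely bounded away from $0$ and from $\infty$, one gets that on the event $\{NP_i\le b\}$ one has $N\widetilde P_i/NP_i=1+O(i^{\rho}/N)=1+O(N^{-2/(\rho+2)})$ uniformly in $i\le xN^{1/(\rho+2)}$. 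Since this perturbation of the point positions vanishes uniformly on compact subsets of $(0,\infty)$, $\Xi_N$ and $\sum_i\delta_{N\widetilde P_i}$ have the same limit and it suffices to argue with $\widetilde P_i$. For the point process $\sum_i\delta_{N\widetilde P_i}$ the conditional distribution of each point is explicit, $\P(N\widetilde P_i\le t\mid \mathcal{F}_{i-1})=1-\exp(-t\,i^{\rho+1}/(N\rho X_{i-1}))$, so its predictable intensity measure $\overline\Lambda_N=\sum_i\P(N\widetilde P_i\in\cdot\mid\mathcal{F}_{i-1})$ satisfies $\overline\Lambda_N((0,t])=A_N(t)$ with $A_N(t)=\sum_{i=1}^{\lfloor xN^{1/(\rho+2)}\rfloor}(1-\exp(-t\,i^{\rho+1}/(N\rho X_{i-1})))$.

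It then remains to verify the two hypotheses of the limit theorem. Uniform asymptotic negligibility follows from $\sup_i\P(N\widetilde P_i\le b\mid\mathcal{F}_{i-1})\le \sup_i b\,i^{\rho+1}/(N\rho X_{i-1})\le b\,x^{\rho+1}(\rho\inf_jX_{j-1})^{-1}N^{-1/(\rho+2)}\to 0$ almost surely. For the convergence of the intensities, one linearises the exponential: since $t\,i^{\rho+1}/(N\rho X_{i-1})$ is at most of order $N^{-1/(\rho+2)}$ for the indices involved, $|A_N(t)-(t/N\rho)\sum_i i^{\rho+1}/X_{i-1}|$ is bounded by $\frac{1}{2}\sum_i(t\,i^{\rho+1}/(N\rho X_{i-1}))^2=O(N^{-1/(\rho+2)})$; and since $X_{i-1}\to X_\infty$ almost surely while the weights $i^{\rho+1}$ are dominated by the largest indices, $\sum_{i=1}^k i^{\rho+1}/X_{i-1}\sim X_\infty^{-1}k^{\rho+2}/(\rho+2)$. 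Taking $k=\lfloor xN^{1/(\rho+2)}\rfloor$, whence $k^{\rho+2}\sim x^{\rho+2}N$, gives $A_N(t)\to x^{\rho+2}(X_\infty\rho(\rho+2))^{-1}t=\Lambda((0,t])$ almost surely, and similarly $\overline\Lambda_N\to\Lambda$ as random measures. Feeding these two facts into the Cox-process limit theorem yields the convergence of $\Xi_N$ asserted in the proposition.

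The main obstacle is exactly the dependence between $X_\infty$ and the sequence $(E_n^1)$: it precludes the naive approach of conditioning on $X_\infty$ and forces the argument through the filtration-based formulation, in which $X_\infty$ only shows up in the limit. Within that formulation one must still take care that the substitutions $Z_i\to E_i^1$, $X_{i-1}\to X_\infty$ and the convergence $A_N\to\Lambda$ all hold \emph{uniformly} over the range $i\le xN^{1/(\rho+2)}$, which grows with $N$, and \emph{at the level of random measures} rather than merely pointwise in $t$, the limiting directing measure being itself random through $X_\infty$; carrying this out carefully, together with the verification of the exact hypotheses of the point-process limit theorem, is where the technical effort lies, and it is done in detail in \cite{mathpaper}.
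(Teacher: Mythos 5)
Your computational work is correct and, in fact, reproduces the estimates that drive the paper's own proof: the replacement of $Z_i$ by $E_i^1$ with uniformly small relative error on the relevant range of indices, the explicit conditional law $\P(N\widetilde P_i\le t\mid\mathcal{F}_{i-1})=1-\exp(-t\,i^{\rho+1}/(N\rho X_{i-1}))$, the uniform negligibility of each summand, and the asymptotic $\sum_{i\le k} i^{\rho+1}/X_{i-1}\sim X_\infty^{-1}k^{\rho+2}/(\rho+2)$, hence $A_N(t)\to x^{\rho+2}(X_\infty\rho(\rho+2))^{-1}t$ almost surely. But the argument then leans on a ``standard criterion'' for convergence of adapted point-mass arrays to Cox processes --- uniform negligibility plus a.s.\ (or in-probability) convergence of the sum of one-step predictable intensities to a random measure $\Lambda$ --- and this criterion does not hold in the stated generality, which is exactly where the real work lies.

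The trouble is that convergence of predictable intensities controls the \emph{compensator} of the limit, not its \emph{conditional Poisson structure}. A point process whose compensator in some filtration is a random measure $\Lambda$ need not be the Cox process directed by $\Lambda$: it is a Cox process only if, conditionally on $\Lambda$, the increments are Poisson, i.e.\ only if $\Lambda$ is exogenous to the jumps. The naive telescoping that would turn $\E\bigl(e^{-\sum_i f(N\widetilde P_i)}\bigr)$ into $\E\bigl(\prod_i\E\bigl(e^{-f(N\widetilde P_i)}\mid\mathcal{F}_{i-1}\bigr)\bigr)$ fails: $\E(e^{-f(N\widetilde P_i)}\mid\mathcal{F}_{i-1})$ is $\mathcal{F}_{i-1}$-measurable and therefore does not pull out when one conditions further down the filtration, so the Laplace functional does not factor through the $A_N$'s. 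The associated martingale $M_n = e^{-\sum_{i\le n}f(N\widetilde P_i)}/\prod_{i\le n}(1-a_i)$ with $a_i=\E(1-e^{-f(N\widetilde P_i)}\mid\mathcal{F}_{i-1})$ has $\E(M_n)=1$, but since $\prod_i(1-a_i)$ converges to a genuinely random limit $e^{-\Lambda(1-e^{-f})}$ that is correlated with $M_{k(N)}$, this identity does not yield $\E\bigl(e^{-\sum_i f}\bigr)\to\E\bigl(e^{-\Lambda(1-e^{-f})}\bigr)$. This is precisely the obstruction you flag in your last paragraph as ``the dependence between $X_\infty$ and the sequence $(E_n^1)$''; the filtration-based rewriting does not dissolve it, it merely relocates it into the unverified limit theorem.

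What the paper's proof does, and what is genuinely needed, is a decoupling argument that shows $\Lambda$ is \emph{asymptotically exogenous}: one first proves that only the indices $i\ge\beta(N)$ with $\beta(N)\ll k(N)$ contribute to the limiting point process; then one uses Doob's inequality for the reversed martingale $\overline M_n=\sum_{k\ge n}(E_k^1-1)/k$ to justify replacing $X_{i-1}$ by $X_{\beta(N)}$ uniformly over $i\ge\beta(N)$; and only then can one legitimately condition on $\mathcal{F}_{\beta(N)}=\sigma(E_k^1,\ k<\beta(N))$, because $X_{\beta(N)}$ is $\mathcal{F}_{\beta(N)}$-measurable while the $E_i^1$ with $i\ge\beta(N)$ are independent of it, after which Grigelionis's theorem gives a genuine conditional Poisson limit. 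The Cox structure is a conclusion of that decoupling, not a consequence of compensator convergence alone. Your argument, as written, omits precisely this step.
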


\begin{proof}
Because of the technicality involved, we only give a sketch of the proof. The reader is referred to~\cite{mathpaper} for more details. 

To prove  the convergence of  the sequence of  point processes $\Np_N  = \sum_{i=1}^{k(N)}
\delta_{\{NP_i\}}$ with $k(N) = xN^{1/(\rho+2)}$, it  is enough to show the convergence of
the   Laplace   transforms  of   these   point   processes   applied  to   some   suitable
functions. Non-negative  continuous functions  with a compact  support would be  enough to
prove the  result, but  the next theorem  requires a  slightly stronger result,  namely it
requires  the  converge  of  Laplace  transforms  for  non-negative  continuous  functions
vanishing at  infinity, i.e.,  that for any  function $f  \geq 0$ continuous  vanishing at
infinity, we have
\[
	\lim_{N \to +\infty} \E\left(e^{-\Np_N(f)}\right) = \E\left(e^{-\Np_\infty(f)}\right)
\]
where, conditionally on $X_\infty$, $\Np_\infty$ is a Poisson process with intensity $x^{\rho+2}X_\infty^{-1} / (\rho+2)$.

The general idea is to condition on the random variable
$X_\infty$. However, for each $n \geq 1$, $X_\infty$ and $Z_n$ are dependent, so that this
cannot be directly done. Instead, the first step of the proof is to show that only the
last terms of the point process matter, i.e., that $\E(e^{-\Np_N(f)})$ and
$\E(e^{-\sum_{\beta(N)}^{k(N)} f(NP_i)})$ have the same limit, for any sequence $\beta(N)
\ll k(N)$. So we are left with large indexes $i \geq \beta(N)$, for which the
approximation $P_i = \rho i^{-\rho-1}X_i Z_i \approx \rho  i^{-\rho-1} X_{\beta(N)} Z_i$
can be justified. The main tool behind this approximation is Doob's Inequality applied to
the reversed martingale $$\overline M_n = \sum_{k \geq n} (E_k-1)/k.$$ And now, due to
this approximation, it is perfectly rigorous to condition on $\F_N = \sigma(E_k, k <
\beta(N))$: since for $i \geq \beta(N)$, $Z_i$ is independent of $X_{\beta(N)}$, we are
exactly left with proving the result for the sequence of point processes $\Np'_N =
\sum_{\beta(N)}^{k(N)} \delta_{\{N x_N i^{-\rho-1} Z_i\}}$ with any converging sequence
$x_N \to x_\infty$ ($x_N$ has to be thought as being equal to $\rho X_{\beta(N)}$). If $f$
has a compact support, it is possible to conclude by applying a result from
Grigelionis~\cite{Grigelionis61:0} to show that this sequence of point processes converges
to a Poisson process with intensity $x^{\rho+2} / (x_\infty (\rho+2))$. In the general
case, the convergence is shown thanks to computations, by controlling the speed at which
$Z_i$ converge in law to an exponential random variable. 
\end{proof}

This result together  with standard poissonization techniques make it possible to prove
the following theorem, which is the main result of this section. 
\begin{thm}\label{thm:random}
Let $\kappa(N) = N^{1/(\rho+2)}$. For $x > 0$, $W_N^{x\kappa(N)}$ converges in
distribution to a Poisson random variable with a random parameter
$x^{\rho+2}\big(X_\infty\rho(\rho+2)\big)^{-1}$ when $N \to \infty$. 
\end{thm}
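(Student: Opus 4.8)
The plan is to combine the point-process convergence of Proposition~\ref{prop:process} with the poissonization of the number of balls announced just before the theorem. First I would replace the deterministic number $N$ of balls by an independent Poisson random variable $\widetilde{N}$ with mean $N$, and denote by $\widetilde{W}$ the number of empty urns among the first $x\kappa(N)$ in this poissonized model. Conditionally on the random distribution ${\cal P}=(P_n)$, the occupancies of distinct urns are then independent, the $i$-th urn containing a Poisson variable with parameter $NP_i$; the indicators $I_{N,i}$ are therefore conditionally independent Bernoulli variables with parameter $e^{-NP_i}$, so that for $s\in(0,1)$
\[
\E\left(s^{\widetilde{W}}\right)=\E\left(\prod_{i=1}^{x\kappa(N)}\left(1-(1-s)e^{-NP_i}\right)\right)=\E\left(e^{-\Np_N(g_s)}\right),
\]
where $\Np_N=\sum_{i=1}^{x\kappa(N)}\delta_{\{NP_i\}}$ and $g_s(y)=-\log\left(1-(1-s)e^{-y}\right)$. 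For $s\in(0,1)$, the function $g_s$ is non-negative, continuous on $[0,+\infty)$ and vanishes at infinity; this is precisely the class of test functions for which Proposition~\ref{prop:process} was established (see the discussion in its proof), and one obtains $\E(e^{-\Np_N(g_s)})\to\E(e^{-\Np_\infty(g_s)})$, where, conditionally on $X_\infty$, the limit $\Np_\infty$ is a homogeneous Poisson process on $[0,+\infty)$ with intensity $\lambda:=x^{\rho+2}\left(X_\infty\rho(\rho+2)\right)^{-1}$.

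The limit is then computed through the Laplace functional of a Poisson process: since $e^{-g_s(y)}=1-(1-s)e^{-y}$,
\[
\E\left(e^{-\Np_\infty(g_s)}\mid X_\infty\right)=\exp\left(-\lambda\int_0^{+\infty}(1-s)e^{-y}\,dy\right)=e^{-\lambda(1-s)},
\]
which is exactly the generating function of a Poisson variable with parameter $\lambda$. Averaging over $X_\infty$ then shows that $\E(s^{\widetilde{W}})$ converges to $\E\left(e^{-\lambda(1-s)}\right)$, the generating function of a mixed Poisson variable with random parameter $x^{\rho+2}\left(X_\infty\rho(\rho+2)\right)^{-1}$. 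It is worth stressing that, conditionally on the whole sequence ${\cal P}$, the sum of Bernoulli variables $\widetilde{W}$ does \emph{not} converge to a Poisson law, since the largest of the probabilities $e^{-NP_i}$ remains of order one; the Poisson character emerges only after averaging over ${\cal P}$, and it can be read off the thinning property --- retaining each point $y$ of $\Np_\infty$ with probability $e^{-y}$, the chance that the corresponding urn is empty, yields a Poisson process of total mass $\lambda\int_0^{+\infty}e^{-y}\,dy=\lambda$.

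The remaining step, and the one I expect to demand the most care, is the depoissonization. I would fix $N'=N\pm\lceil N^{2/3}\rceil$ and poissonize around $N'$ rather than $N$: since $N'/N\to1$, the point process $\sum_{i=1}^{x\kappa(N)}\delta_{\{N'P_i\}}$ has the same limit $\Np_\infty$ as in Proposition~\ref{prop:process} (a dilation by a factor tending to $1$ is continuous), so the argument above yields the same mixed Poisson limit for the corresponding poissonized count. Because a Poisson random variable with mean $N'$ concentrates around $N'$ at scale $\sqrt{N}\ll N^{2/3}$, and because $m\mapsto W_m^{x\kappa(N)}$ is non-increasing --- adding a ball can only fill urns --- the variable $W_N^{x\kappa(N)}$ is, with probability tending to one, squeezed between the poissonized counts obtained with $N'=N+\lceil N^{2/3}\rceil$ and with $N'=N-\lceil N^{2/3}\rceil$. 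Both of these converge in distribution to a mixed Poisson variable with parameter $x^{\rho+2}\left(X_\infty\rho(\rho+2)\right)^{-1}$, and the squeeze gives the theorem. The convergence of the point process in Proposition~\ref{prop:process} already absorbs the contribution of the urns with index close to $x\kappa(N)$, so nothing further is needed on that front.
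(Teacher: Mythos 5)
Your proof is correct and follows essentially the same route as the paper: poissonize the number of balls, note that the conditional independence of urn occupancies given $\mathcal{P}$ lets you write the generating function of the (poissonized) count of empty urns as $\E(e^{-\Np_N(g_s)})$ with $g_s(y)=-\log(1-(1-s)e^{-y})$, pass to the limit via Proposition~\ref{prop:process} and the Poisson Laplace functional (the integral $\int_0^\infty(1-e^{-g_s})=1-s$), and then depoissonize. The only place you go beyond the paper's sketch is in making the depoissonization concrete, via the monotonicity of $m\mapsto W_m^{x\kappa(N)}$ and the concentration of a Poisson of mean $N\pm N^{2/3}$ at scale $\sqrt N$, whereas the paper simply invokes ``standard arguments''; the remark on thinning nicely explains why the Poisson structure only appears after averaging over $\mathcal P$, but it is not needed for the argument.
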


\begin{proof}
Again, only a sketch of the proof is given. The first step of the proof is to show the result for the random variable $W_{\Po_N}^{x\kappa(N)}$ where $\Po_N$ is a Poisson random variable with parameter $N$, independent of everything else so far. The idea is that the law of $W_{\Po_N}^{x\kappa(N)}$ is not sensitive to the fluctuations of $\Po_N$ around its mean value, equal to $N$, so that the law of $W_{\Po_N}^{x\kappa(N)}$ and of $W_N^{x\kappa(N)}$ will have the same asymptotic behavior.

To show the convergence of $W_{\Po_N}^{x\kappa(N)}$, we consider its generating function: for $u > 0$ and $k \in \N$, we can compute
\[
	\E \left( u^{W_{\Po_n}^k} \right) = \E \left( e^{\sum_{i=1}^k \log\left(1-(1-u)e^{-NP_i}\right)} \right) = \E\left( e^{-\Np_{N,k}(f_u)} \right),
\]
where $\Np_{N,k} = \sum_{i=1}^k \delta_{\{NP_i\}}$, and $f_u(x) = -\log\big(1 - (1-u) e^{-x} \big)$ for $x \geq 0$. Then $\int_0^\infty(1-e^{-f_u}) = 1-u$, so that we conclude with the previous proposition that $W_N^{x\kappa(N)}$ converges to a random variable which, conditionally on $X_\infty$, is a Poisson random variable with parameter $x^{\rho+2} \big(X_\infty \rho (\rho + 2) \big)^{-1}$. The fact that $W_N^{x\kappa(N)}$ and $W_{\Po_N}^{x\kappa(N)}$ have the same asymptotic behavior (in law) then follows by standard arguments.
\end{proof}

This theorem readily yields the following corollary. 

\begin{corol}
The random variable $\nu^R(N) / \kappa(N)$
converges in distribution to a random variable $Y$ such that 
\[
	\P(Y \geq x) = \E \left( e^{-x^{\rho+2}X_\infty^{-1} / (\rho(\rho+2))} \right).
\]
Finally, if $T^R(N) \stackrel{def}{=} T_{\nu^R(N)}$ then, for the convergence in
distribution, 
\begin{equation}
\label{random_time}\lim_{N\to+\infty}\frac{T^R(N)}{\log(N)} {=} \frac{1}{\rho+2}.
\end{equation}
\end{corol}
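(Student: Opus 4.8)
The corollary is a direct consequence of Theorem~\ref{thm:random} together with the almost sure asymptotics of $(T_n)$ from Proposition~\ref{prop:asymptotic}: the plan is to translate the limit law of $W_N^{x\kappa(N)}$ into the law of $\nu^R(N)/\kappa(N)$, and then to read off the order of $T^R(N)$ from $\nu^R(N)$.

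First I would use the elementary identity $\{\nu^R(N) > k\} = \{W_N^k = 0\}$, valid for every $k\in\N$, since the first empty urn has index exceeding $k$ exactly when the first $k$ urns are all non-empty. Hence, for $x>0$, $\P\big(\nu^R(N)/\kappa(N) > x\big) = \P\big(W_N^{\lfloor x\kappa(N)\rfloor} = 0\big)$. By Theorem~\ref{thm:random}, $W_N^{x\kappa(N)}$ converges in distribution, as $N\to\infty$, to a variable $W$ which, conditionally on $X_\infty$, is Poisson with parameter $\Lambda_x := x^{\rho+2}\big(X_\infty\rho(\rho+2)\big)^{-1}$; since all these variables are $\N$-valued, weak convergence forces convergence of the point mass at $0$, and $\P(W = 0) = \E\big(e^{-\Lambda_x}\big)$ by conditioning. (The integer part is immaterial because $k\mapsto W_N^k$ is non-decreasing, so $W_N^{\lfloor x\kappa(N)\rfloor}$ is sandwiched between $W_N^{x'\kappa(N)}$ and $W_N^{x''\kappa(N)}$ for $x'<x<x''$, and one lets $x',x''\to x$.) This yields $\P(Y\geq x) = \E\big(e^{-x^{\rho+2}X_\infty^{-1}/(\rho(\rho+2))}\big)$. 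Since $X_\infty$ is almost surely finite and strictly positive (it has a Weibull law by Proposition~\ref{prop:asymptotic}), dominated convergence shows this tail function is continuous on $(0,\infty)$, tends to $1$ as $x\to0^+$ and to $0$ as $x\to+\infty$, so $Y$ is a bona fide random variable supported in $(0,\infty)$ and the convergence in distribution holds at every $x>0$.

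Next I would turn to $T^R(N) = T_{\nu^R(N)} = \sum_{k=1}^{\nu^R(N)} E_k^1/k$. By Proposition~\ref{prop:asymptotic} the \emph{non-random-index} sequence $(T_n - \log n)_n$ converges almost surely to the finite variable $T_\infty$; hence for each $\varepsilon>0$ there is an almost surely finite (random) index $n_0$ with $|T_n - \log n - T_\infty|\leq\varepsilon$ for all $n\geq n_0$. From the first part, $\nu^R(N)/\kappa(N)$ converges in distribution to the almost surely positive variable $Y$, so $\nu^R(N)\to+\infty$ in probability, whence $\P(\nu^R(N) < n_0)\to 0$; this gives $T^R(N) - \log\nu^R(N) \to T_\infty$ in probability. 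Writing $\log\kappa(N) = (\log N)/(\rho+2)$, one then has
\[
\frac{T^R(N)}{\log N} = \frac{1}{\rho+2} + \frac{1}{\log N}\left(\log\frac{\nu^R(N)}{\kappa(N)} + \big(T^R(N) - \log\nu^R(N)\big)\right),
\]
and since $\log(\nu^R(N)/\kappa(N))$ converges in distribution (to $\log Y$, by the continuous mapping theorem) and $T^R(N) - \log\nu^R(N)$ converges in distribution (to $T_\infty$), both terms in the bracket are $O_\P(1)$; dividing by $\log N$ sends the bracket to $0$ in probability, so $T^R(N)/\log N \to 1/(\rho+2)$, which is~\eqref{random_time} since the limit is deterministic.

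I expect the only genuinely delicate point to be the step $T^R(N) - \log\nu^R(N) \to T_\infty$: it substitutes the random indices $\nu^R(N)$ into the a.s. convergent \emph{fixed-index} sequence $(T_k - \log k)_k$, so one cannot merely invoke convergence in distribution of $T_n - \log n$ — the argument needs the uniform tail control supplied by almost sure convergence, combined with $\nu^R(N)\to+\infty$ in probability. Everything else is bookkeeping; in particular the substantive probabilistic content, namely the emergence of the scale $N^{1/(\rho+2)}$ and of the mixed-Poisson limit, is already packaged in Proposition~\ref{prop:process} and Theorem~\ref{thm:random}, whose full proofs are deferred to~\cite{mathpaper}.
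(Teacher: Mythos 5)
Your proof is correct, and it takes the route the paper clearly has in mind: the paper gives no explicit argument for this corollary (it only says the theorem ``readily yields'' it), but it proves the deterministic analogue, Corollary~\ref{corolhit}, by exactly your two steps, namely the identity $\P(\nu>k)=\P(W_N^k=0)$ combined with the point-mass convergence from the Poisson limit, and then the almost sure convergence of $T_n-\log n$ to $T_\infty$ to pass from the index $\nu$ to the time $T_\nu$. Your handling of the delicate point (substituting the random index $\nu^R(N)$ into the a.s.\ convergent fixed-index sequence, via $\nu^R(N)\to\infty$ in probability because $Y>0$ a.s.) is precisely what is needed and is the same device used implicitly in Corollary~\ref{corolhit}.
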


The fact that the parameter of the limiting Poisson law is random has important effects,
especially concerning the expectation. Indeed, it stems from Equation~\eqref{eq:lawX} and
Proposition~\ref{prop:asymptotic} that $\lim\E (W_N^{x\kappa(N)})$ is proportional to $\E
X_\infty^{-1}$ and $\E X_\infty^{-1} < +\infty$ if and only if $\rho < 1$. Note in
particular that the value $\rho = 1$ plays a special role for our system. 

For $\rho > 1$,
the mean value of $W_N^{x\kappa(N)}$ diverges because it happens that a finite number of
intervals (actually, the $\lfloor \rho \rfloor$ first intervals) capture most of the
balls. This event happens with an increasingly small probability, so that in the limit as
$N$ goes to infinity, it does not have any impact on our system. However, for a fixed $N$,
this event happens with a fixed probability as well. For instance, we commonly observed on
various simulations for $\rho = 2$ and $N = 10000$ that more than $95\%$ of the peers go
to the first server, which is clearly an undesirable behavior of the system.

\section{Discussion} \label{sec:discussion}

In this section, a  set of simulations of the file sharing  principle  is presented to test
the different approximations made  in this paper in term of urn  and ball models. These
simulations  are in  particular used  to justify  Approximation~\ref{approx:heuristic}, as
well as to  compare the insights into the  dynamics of the system provided by  the two urn
and  ball models  studied in  this  paper. Moreover,  another server  selection policy  is
considered, namely when an incoming peer chooses the server at random.

Throughout this section, we discuss the relevance of several random variables. The goal is
to assess the accuracy of the procedure consisting of estimating the length of the first regime by using the random variable  $\nu$  specified in 
Definition~\ref{def:heuristic}. For this purpose, we define different times:
\begin{enumerate}\label{defT2}
\item $\widetilde T_{\One}$ is the first time when two servers are  created and less than 2 peers have arrived.
\item $\widetilde T_{\Two}$ is the last time when there is an empty server.
\item $\widetilde T_{\Three}$ is the first time when the input rate is smaller than the
output rate (see Section~\ref{sec:other_heuristics}). 
\item $\widetilde T_{\Four}$ the first time when a server becomes empty, i.e., when a peer
leaves a server where it was alone. 
\end{enumerate}
We  consider  the  corresponding  quantities   $\widetilde  \nu_i$:  for  $i  =  1,2,3,4$,
$\widetilde  \nu_i$ is  the index  of the  interval $(S_{i-1},  S_i)$ in  which  the event
corresponding  to  $\widetilde  T_i$   happens.  In  particular,  $\widetilde  \nu_{\One}$
corresponds to  Definition~\ref{def:heuristic}. In every  simulation, the averages  of the
quantities $\widetilde \nu_i$ and $\widetilde T_i$ are calculated for the value $\rho = 2$
over $10^4$ iterations of the system which proved to be sufficient in term of numerical
stability. The number of peers $N$ ranges up to $5.10^7$. 

\subsection{Validation of Approximation~\protect\ref{approx:heuristic}}

Definition~\ref{def:heuristic} specifies  the variable considered throughout this paper
to determine the duration of the first regime of the file sharing system. This
variable was chosen for two reasons. First, it is a good indicator of the current
equilibrium  of the system: the output rate begins to be comparable with the input rate
when only a few peers arrive between the creation of  two successive servers. Moreover, the
stopping time defined in this way is mathematically tractable when transposed into  the
context of a certain urn and ball model. Compared to~\cite{Yang06:0}, it is interesting to
note that we are actually able to rigorously prove results, and not only rely on
simulation. As a byproduct, the mathematical problems arising in this context  are
interesting in themselves. 

For the sake of completeness, several points need to be addressed. First, for how long is
Approximation~\ref{approx:heuristic} valid? Since the random variable $\nu$ specified  in
Definition~\ref{def:heuristic} corresponds to $\widetilde \nu_{\One}$, we argued in
Section~\ref{sec:modeling} that this approximation holds until $\widetilde T_{\One}$. This
is the main assumption that makes it possible to cast our problem in terms of urns and
balls, and to derive precise results on $\widetilde \nu_{\One}$ and $\widetilde
T_{\One}$. 

In order to validate the results of Section~\ref{sec:mathr}, we check that $\widetilde
\E(\nu_{\One})$ and $\E(\widetilde T_{\One})$ behave as predicted by
Theorem~\ref{thm:random}. From this theorem, we expect to have $\E(\widetilde \nu_{\One})
\approx A_1 N^{1/(\rho+2)}$ for some constant $A_1$, and $\E(\widetilde T_{\One} )\approx \log
(N) / (\rho + 2)$. Figure~\ref{fig:validation} shows the graphs $\log(\E(\widetilde
\nu_{\One}))$ and $\E(\widetilde T_{\One})$ versus $\log(N)$: the straight lines depicted prove
a good agreement with the theory. Moreover, via  a fitting procedure, one can compute the
slopes of these lines: the results are summarized in Table~\ref{Tab}.  

\begin{figure}[ht]
\begin{center}
\scalebox{0.66}{\includegraphics{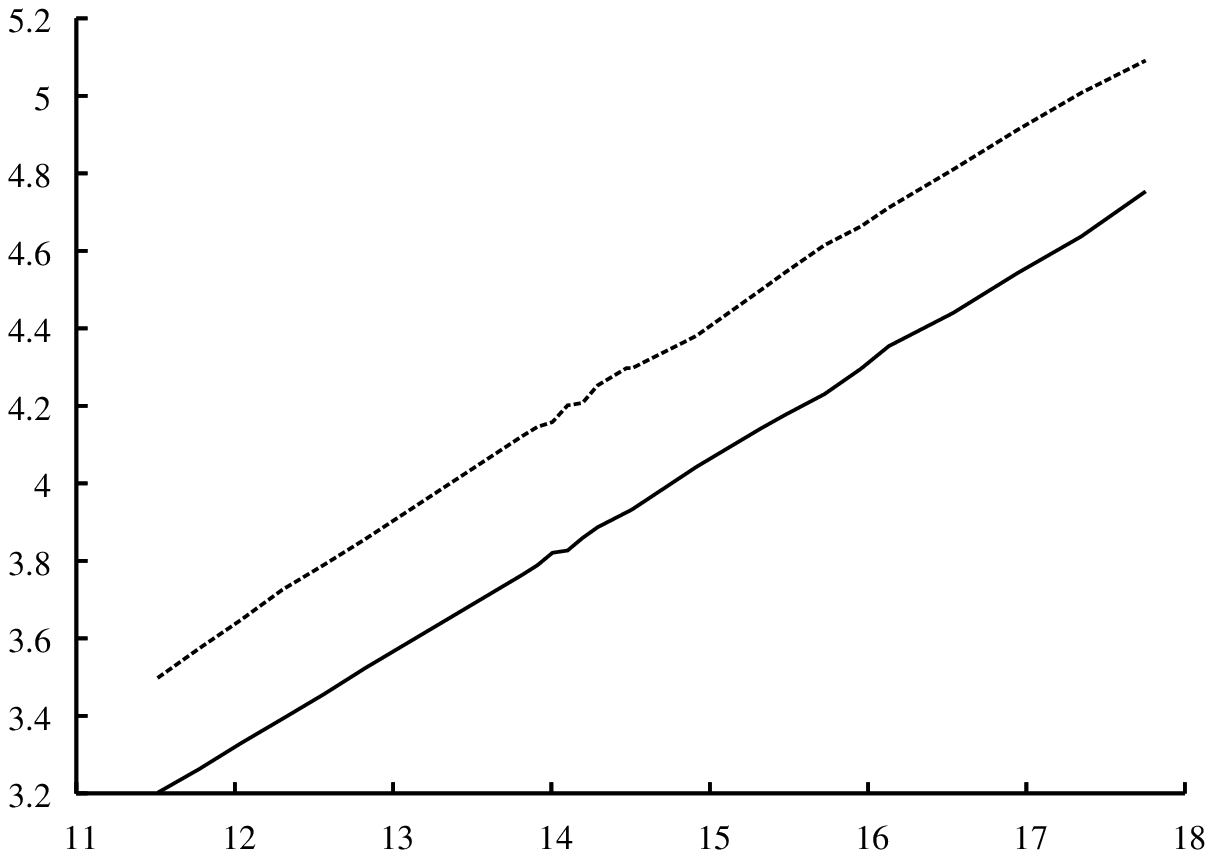}}
\put(-30,20){$\log N$}
\caption{$\log(\E(\widetilde \nu_{\One}))$ (solid) and $\E(\widetilde T_{\One})$ (dashed),
  $\rho=2$.}\label{fig:validation}
\end{center}
\end{figure}

The values of interest in Table~\ref{Tab} are  in the row labeled ``Min'':  we  see  that
simulations  exhibit  a  slope  of  $0.248$  for $\widetilde  \nu_{\One}$ and  of $0.256$
for  $\widetilde T_{\One}$,  whereas the  theory predicts  $0.25$ in  both cases  (because
$\rho  =  2$). These  results are  in   good agreement with
Approximation~\ref{approx:heuristic}, which  justifies the fact that we can 
use this approximation up to time $\widetilde T_{\One}$.

\begin{table}[ht]
\caption{{Coefficients of  growth rates of Fig.~\ref{fig:validation}
    and~\ref{fig:hindsight} in the case $\rho=2$}\label{Tab}}
\bigskip
\scalebox{0.85}{
\begin{tabular}{|l|l|l|l|l|l|l|}\hline
Policy\phantom{$\displaystyle {\Sigma^1}^2$} & $\widetilde \nu_{\One}$& $\widetilde T_{\One}$& $\widetilde \nu_{\Two}$& $\widetilde T_{\Two}$ & $\widetilde \nu_{\Four}$ &$\widetilde T_{\Four}$\\ \hline
Min & 0.2478 & 0.2565 & 0.3765 & 0.5146 & 0.3149 & 0.3287\\ \hline
Random & 0.2470 & 0.2575  & 0.3711 & 0.5078 & 0.2383 & 0.2530\\ \hline
\end{tabular}}
\end{table}

\subsection{Accuracy of Urn and Ball Models}
In this section, we compare the random and deterministic urn and ball models with
$\E(\widetilde  T_{\Two})$, the expected value of the last time
when there is  an empty server.  It clearly appears in
Figure~\ref{fig:empty_servers} that $\widetilde T_{2}$ closely corresponds to the shift in
equilibrium   of   the   system,  and   this   fact   has   been  observed   in   numerous
simulations.      However,     as      we     will      see     in      the     following,
Approximation~\ref{approx:heuristic} does not hold until time $\widetilde T_{\Two}$, which
explains why it  is very challenging from a  mathematical point of view to  derive results on
$\widetilde T_{\Two}$. (Note in addition that  $\widetilde T_{\Two}$ is not a stopping time).

Figure~\ref{fig:hindsight} shows that $\widetilde T_{\One}$ is much smaller than
$\widetilde T_{\Two}$: This result is nevertheless  not surprising. Indeed, as discussed
in Section~\ref{sec:model}, results obtained for the random model point out a local
behavior: the first empty urn arrives in a region,  where still many peers arrive in each
interval. Although many peers should arrive in this time interval, this is in reality not
the case  because a very small interval is generated. Thus, in some sense, the order of
magnitude $N^{1/(\rho+2)}$ provided by the random urn and ball model is misleading for the
initial system. 

\begin{center}
\begin{figure}[ht]
\scalebox{0.6}{\includegraphics{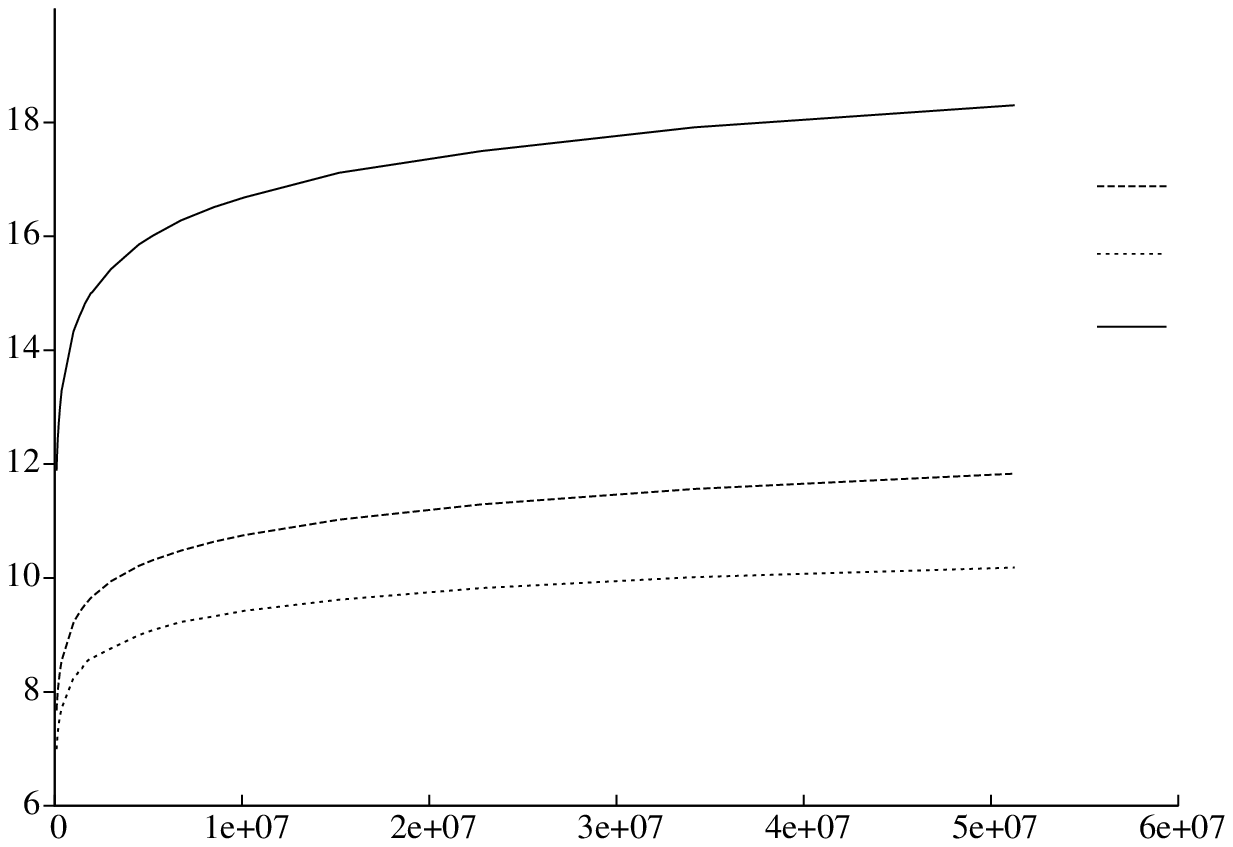}}
\put(-45,115){$\scriptstyle \E(\widetilde T_{1}) $}
\put(-48,101){$\scriptstyle \E( T^{D}) $}
\put(-45,90){$\scriptstyle \E(\widetilde T_{2}) $}
\put(-220,20){\rotatebox{90}{Time}}
\put(-10,20){$N$}
\caption{The times $ \E(\widetilde T_{1})  \leq \E(\widetilde T_{\Two})$ and  $\E(T^D)$
  when $\rho=2$.}\label{fig:hindsight}
\end{figure}
\end{center}

In the deterministic model, the sizes of urns are not random, and the  stochastic fluctuations
arising in the random model do not occur. The deterministic model smooths the local
behavior that appears in the random model, and the order of magnitude $(N / \log
N)^{1/(\rho+1)}$  gives more insight into the global situation of the system. When
only a few peers arrive in an interval, it really means that the equilibrium begins to
shift. One can check in Figure~\ref{fig:hindsight} that the theoretical result
$T^D$ defined by Equation~\eqref{TD} predicted by the deterministic model is closer to
to $\widetilde T_{\Two}$ than to $\widetilde T_{\One}$. 

Although  considering   the  deterministic   model  indeed  improves   the  approximation,
$\widetilde T_{\Two}$ still seems much larger  that $T^D$. However, thanks to our urn and
ball models, we know that the first  order approximation of the times $\widetilde T_i$ is
logarithmic, whereas the  first order approximation for the  indexes $\widetilde \nu_i$ is
polynomial.  Table~\ref{Tab}  provides useful  information  to  understand the  situation.

First, the deterministic model yields a reasonable estimate of the exponent in $\widetilde
\nu_{\Two}$: simulations give  $0.376$ and the deterministic model  $0.333$. Note that the
random model  predicts $0.25$, so a  substantial improvement in accuracy  is obtained when
using  the deterministic  model. This  suggests  that Approximation~\ref{approx:heuristic}
holds  until  $T^D$,  i.e., up to   times  of  order  $N^{1/(\rho+1)}$.   

Second,  we  observe  a
significant  discrepancy  between  the   exponent  for  $\widetilde  \nu_{\Two}$  and  the
coefficient of $\widetilde T_{\Two}$: If Approximation~\ref{approx:heuristic} were to hold
until   $\widetilde    T_{\Two}$,   one   would   have    $\widetilde   T_{\Two}   \approx
\sum_1^{\widetilde  \nu_{\Two}}  E_k^1  /  k$,  which  would  yield,  because  $\widetilde
\nu_{\Two} \approx N^{0.38}$, that $\widetilde T_{\Two} \approx 0.38 \log(N)$. However, we
find that the time $\widetilde T_{\Two}$  is better approximated by $0.52 \log(N)$, and so
Approximation~\ref{approx:heuristic} does not hold  until time $\widetilde T_{\Two}$. This
clearly poses the challenge to  derive  asymptotic results for  $\widetilde T_{\Two}$. Moreover,
this     triggers    another     interesting     question:    For     how    long     does
Approximation~\ref{approx:heuristic} hold?  We give a  partial answer to this  question by
considering  the times  $\widetilde T_{\Three}$  and  $\widetilde T_{\Four}$  in the  next
section.

\subsection{On the Duration of Approximation~1}\label{sec:other_heuristics}
Throughout this  paper, we have  tried to  estimate the time  when the equilibrium  of the
system begins to  shift.   As long as Approximation~1 holds, the input rate $i(t)$  of the
system is the number  of peers, that are not active at  time $t$, times $\rho$, while the
output rate  $o(t)$ is  just the number of servers at  time $t$ (since the service  has 
mean one). Initially, $i(0) =  \rho N$ and 
$o(0) =  1$, and $i(\infty)  = 0$  and $o(\infty) =  N$.  To study  the time at  which the
equilibrium of the  system begins to shift,  it is therefore very natural  to consider the
first time $\widetilde T_{\Three}$ at which $i(t) < o(t)$, i.e., when the number of servers
is greater than  $\rho$ times the number  of non-active peers. As shown  in the following,
considering this  time leads to  the order of  magnitude given by the  deterministic model
(with less precise asymptotics of course).

For times $t < \widetilde T_{\Three}$, we assume that Approximation~\ref{approx:heuristic}
holds, so that we can cast $\widetilde \nu_{\Three}$ in terms of our urn and ball
problem. Let $Z_N^x$ be the number of balls that fall in the $x$ first intervals: 
\[
Z_N^x = \sum_{i=1}^x \eta_i(N) = \sum_{i=1}^N \ind{E_i^\rho \leq T_x}.
\]
The index $\nu_3$ then corresponds to 
$$
\nu_3 = \inf\left\{x: N-Z_N^x \stackrel{def}{=} \widetilde Z_N^x < \frac{x}{\rho}\right\}.
$$
The asymptotic behavior of $\E\big( \widetilde Z_N^x \big)$ when $x$ goes to infinity with $N$ is easy to derive:
\[
\E\big( \widetilde Z_N^x \big) =  N \sum_{i > x+1} \E P_i \sim \alpha N \sum_{i > x+1} i^{-\rho-1} \sim \frac{\alpha}{\rho} N x^{-\rho}
\]
Therefore $\E\big( \widetilde Z_N^x \big) \approx x$ for $x \approx N^{1/(\rho+1)}$, i.e.\
$\widetilde \nu_{\Three}$ is of order $N^{1/(\rho+1)}$, which is the same order of
magnitude as in the deterministic model. Rigorous mathematical analysis could be done to
prove this result, but in our view, considering $\widetilde T_{\One}$ has one main
advantage: Proposition~\ref{prop:heuristic} is almost a rigorous justification of
Approximation~\ref{approx:heuristic}. When considering another time, in particular
$\widetilde T_{\Three}$, we were not able to provide such a strong justification. And as
we have seen in the case of $\widetilde T_{\Two}$, Approximation~\ref{approx:heuristic}
does not hold for the whole first regime, and a strong justification as
Proposition~\ref{prop:heuristic} is therefore very valuable. 

Finally, let us give  some brief results on $\widetilde T_{\Four}$, the  first time when a
server empties. Simulations show  that $\widetilde \nu_{\Four}$ and $\widetilde T_{\Four}$
have    similar    behavior   as    before    (polynomial    and   logarithmic    growths,
respectively). Results in  Table~\ref{Tab} show that the slope  for $\widetilde T_{\Four}$
is   similar   to   the    exponent   of   $\widetilde   \nu_{\Four}$,   suggesting   that
Approximation~\ref{approx:heuristic} holds until $\widetilde T_{\Four}$.

In conclusion, Approximation~\ref{approx:heuristic} holds at least until $N^{1/(\rho+1)}$,
which corresponds to  $\widetilde T_{\One}$ and $\widetilde T_{\Three}$.  However, it does
not hold  until $\widetilde  T_{\Two}$, whereas Figure~\ref{fig:empty_servers}  shows that
until $\widetilde T_{\Two}$,  the system is still in the first  regime. For the particular
value $\rho = 2$, we have $\nu^D \approx A N^{0.33}$ and simulations show that $\widetilde
\nu_{\Two} \approx A_\Two  N^{0.38}$, and so our  approximation by the means of  a urn and
ball    problem   is    not   so    far   from    the   exponent    that   we    want   to
capture.  Proposition~\ref{prop:heuristic} shows that  until $\nu^D$,  there are  only few
empty servers: so between $T^D$ and $\widetilde T_{\Two}$, it could happen that there is a
fraction of empty  servers, and although this fraction  is small, it has an  impact on the
system. Similar phenomenon have been observed in Sanghavi \etal~\cite{Sanghavi:0}.


To conclude this section, we discuss a different routing policy. Throughout this paper, we
have considered the policy where an incoming peer selects the least loaded server, in
terms of number of peers. This policy is compared against  the random one, where an
incoming peer selects  a server uniformly at random among all possible servers. 

Simulations  show that these  policies are  very close  as shown  in Figures~\ref{fig:23},
\ref{fig:25} and~\ref{fig:24}. The only noticeable difference is concerning $\E(\widetilde
\nu_{\Three})$,  cf.\  Figure~\ref{fig:26}.   However,  Table~\ref{Tab}  shows  that  the
exponents  of $\E(\widetilde  \nu_{\Three})$ are  very similar  in the  random and  in the
minimum policy. One can easily check that they are indeed proportional one to  another.

\begin{figure}[ht]
\begin{center}
\rotatebox{-90}{\scalebox{0.25}{\includegraphics{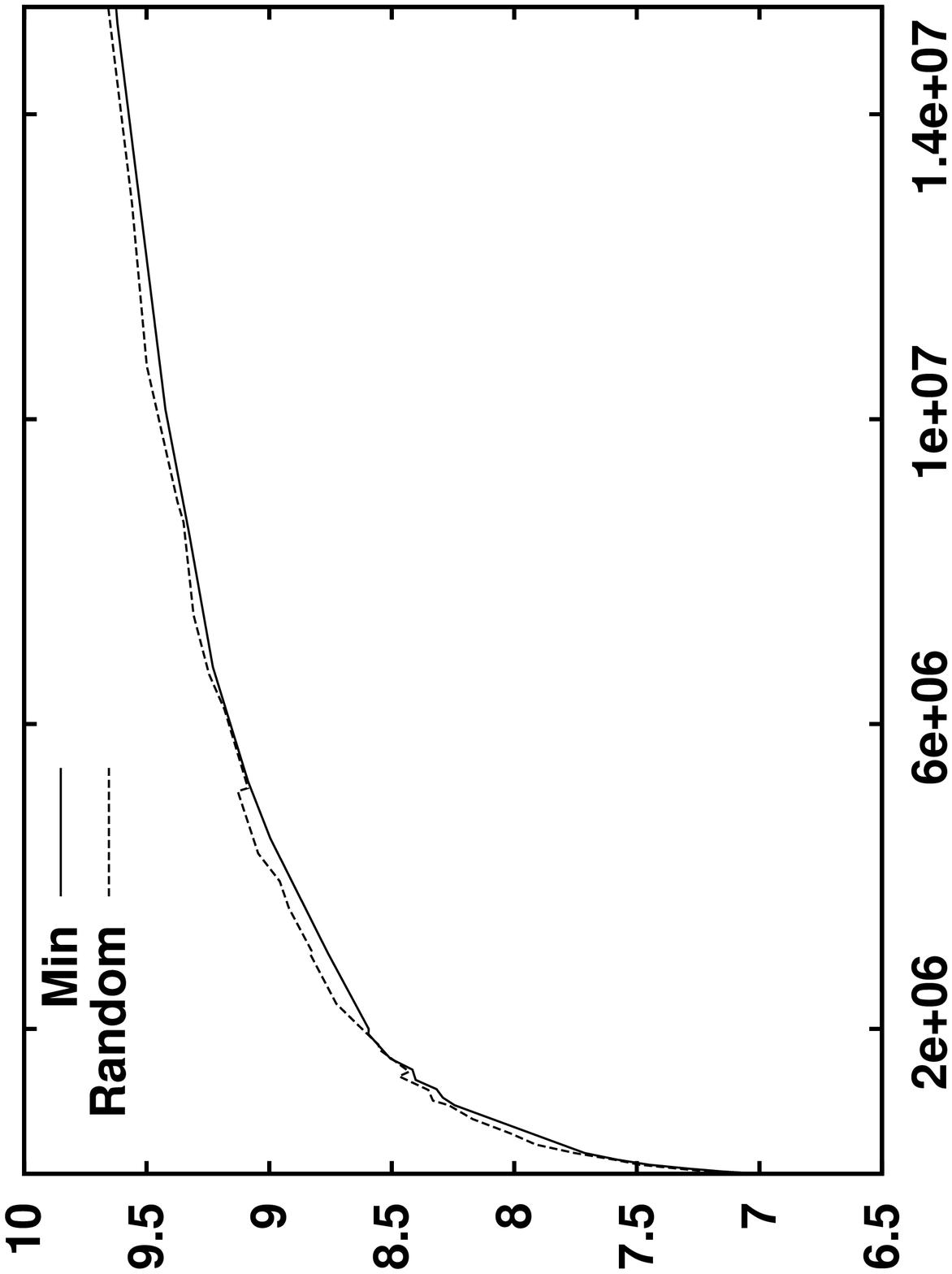}}}
\put(-190,-100){\rotatebox{90}{Time}}
\put(-30,-110){$N$}
\caption{Comparison of Min and Random for $\E(\widetilde T_{\One})$   when $\rho=2$}\label{fig:23}
\end{center}
\end{figure}

\begin{figure}[ht]
\begin{center}
\rotatebox{-90}{\scalebox{0.25}{\includegraphics{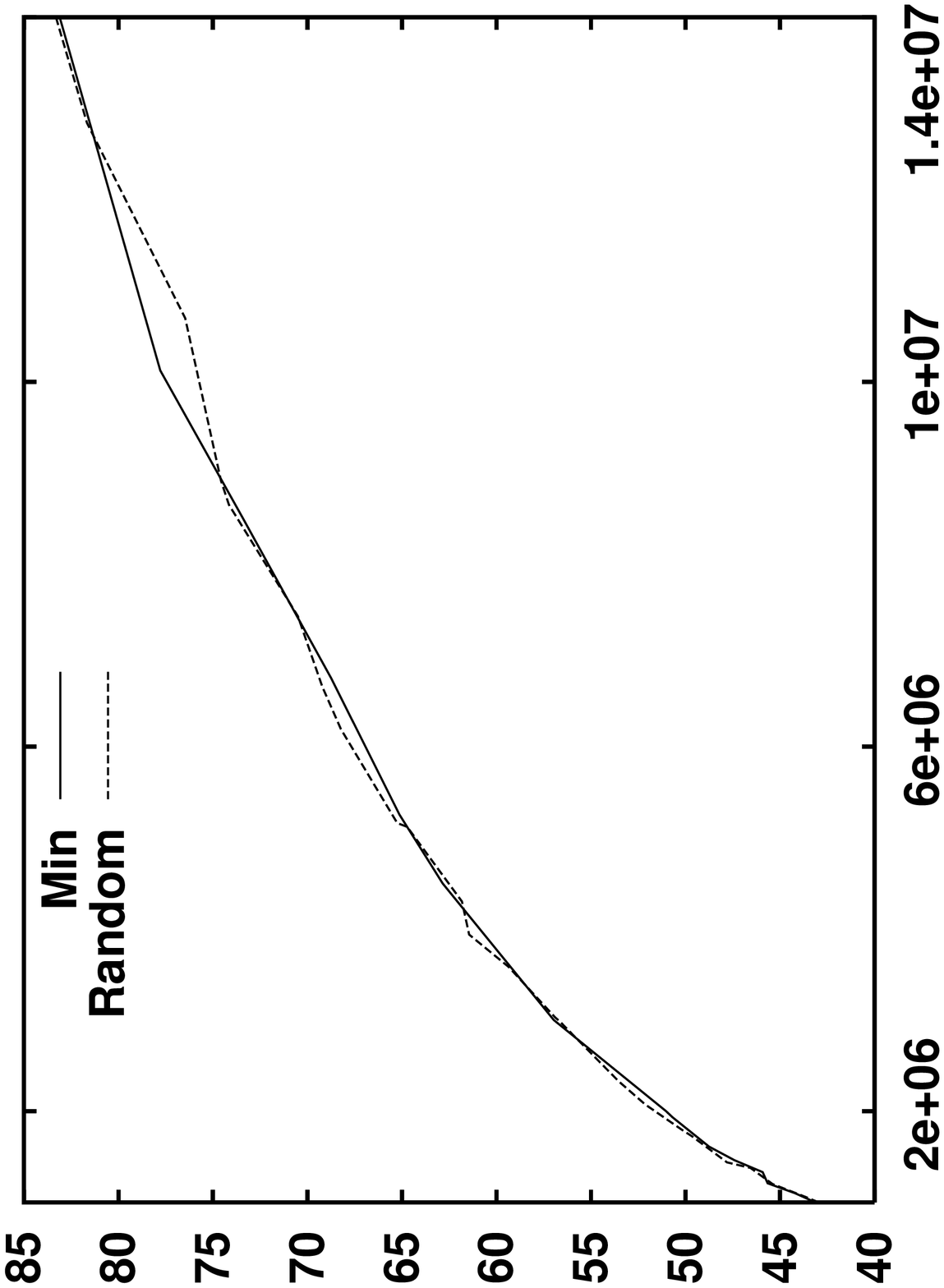}}}
\put(-190,-100){\rotatebox{90}{Time}}
\put(-30,-110){$N$}
\caption{Comparison of Min and Random for $\E(\widetilde \nu_{\One})$   when $\rho=2$}\label{fig:25}
\end{center}
\end{figure}

\begin{figure}[ht]
\begin{center}
\rotatebox{-90}{\scalebox{0.3}{\includegraphics{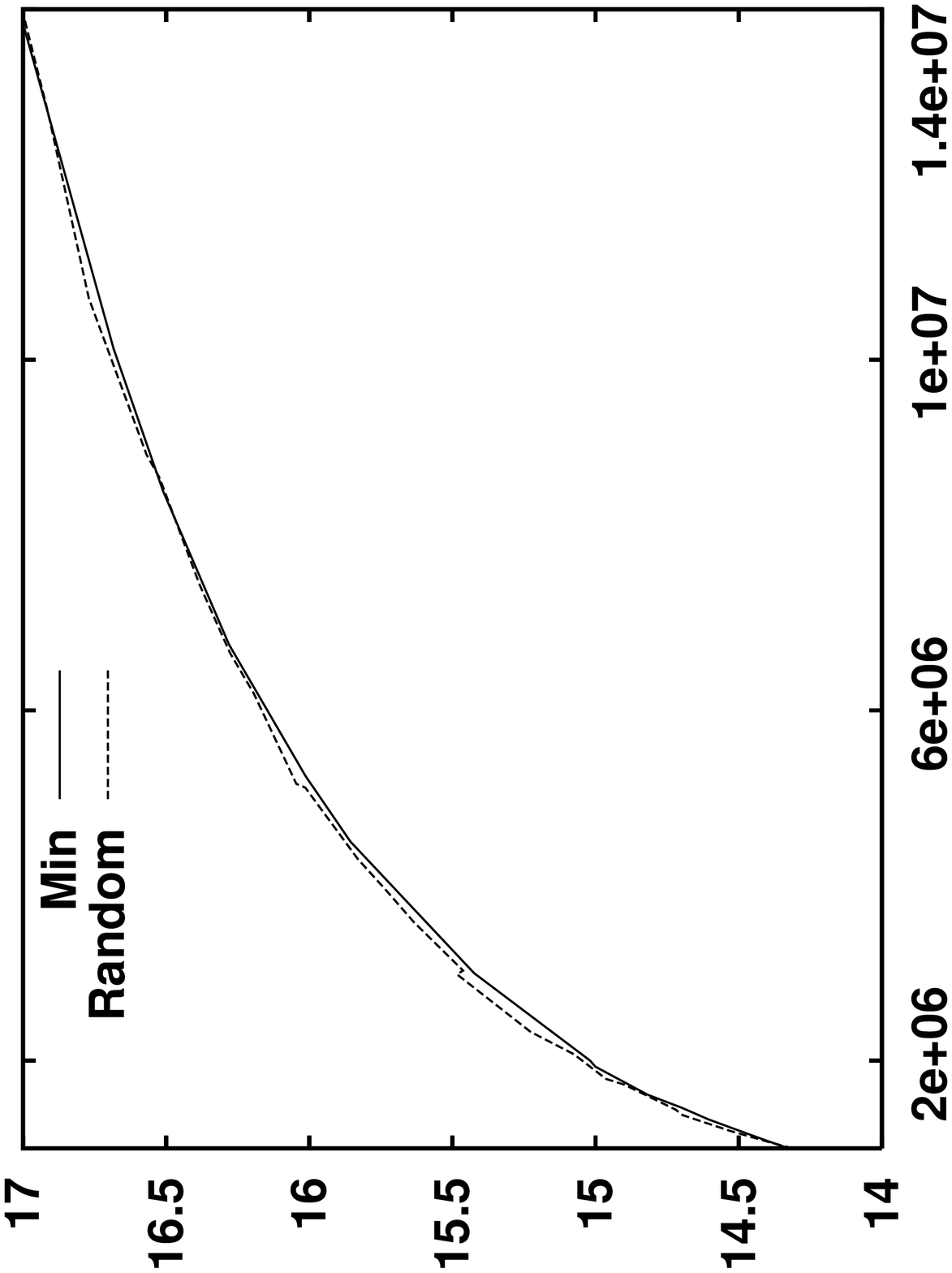}}}
\put(-210,-130){\rotatebox{90}{Time}}
\put(-30,-130){$N$}
\caption{Comparison of Min and Random for $\E(\widetilde T_{\Two})$   when $\rho=2$}\label{fig:24}
\end{center}
\end{figure}

\begin{figure}[ht]
\begin{center}
\rotatebox{-90}{\scalebox{0.3}{\includegraphics{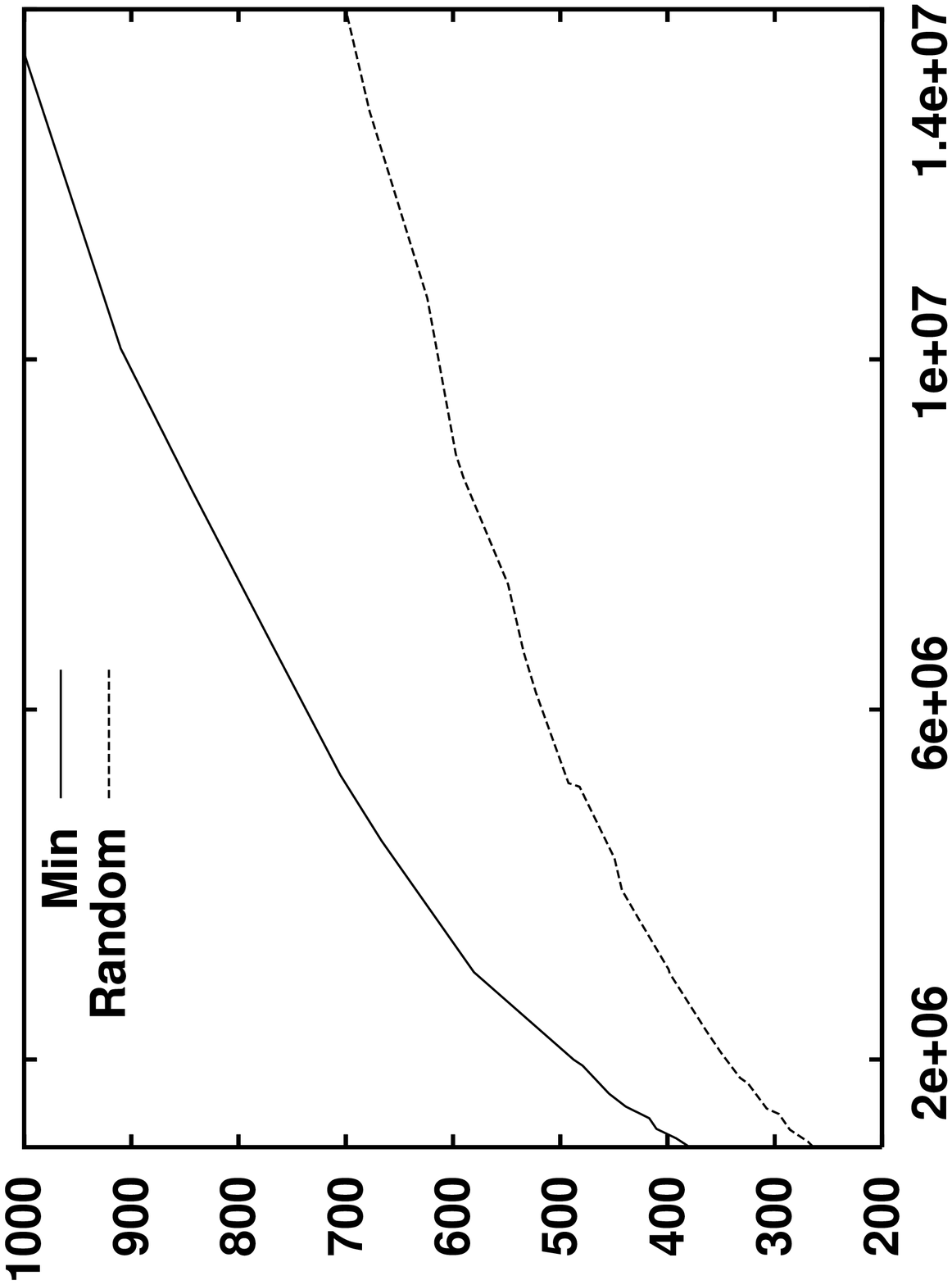}}}
\put(-210,-130){\rotatebox{90}{Time}}
\put(-30,-130){$N$}
\caption{Comparison of Min and Random for $\E(\widetilde \nu_{\Two})$   when $\rho=2$}\label{fig:26}
\end{center}
\end{figure}
Table~\ref{Tab} shows that for the first time when a server becomes empty, the policy has
a great  influence. This is easily understandable: In the min case, it is much harder for a server 
to become empty, because least loaded servers are selected by incoming peers. 

\section{Conclusion}
\label{conclusion}

The simulations moreover underlined the existence of a second regime during which although
 the  fraction of  idle  servers  is small,  the  output rate  is no  longer as   high as
 possible. This second regime is then followed by a third regime during which the capacity
 offered by  the system exceeds by  far the input rate,  and so the  system mainly creates
 empty servers. Our urn and ball approach can no longer be applied to these two regimes,
 and so they will be studied in the near future using other probabilistic techniques.

A possible extension  of our results consists of incorporating the  possibility for a peer
to leave the  system right after completing its  download. In terms of urn  and ball, this
just  amounts to change  the parameter  that defines  the length  of the  $n$-th interval:
instead of $n$, one would just have to  consider $pn$ if $p$ is the probability for a peer
to become  a server after completing  its download.  An  extended model where the  file is
split into  different chunks essentially amounts  to study a  multi-class queueing network
with a  random number of servers  of different classes which proves to be  a much more
difficult problem.

Finally, a natural extension is to consider a general service distribution, instead of the
exponential one.  In this case, the process  of creation of servers can be described as an
age-dependent  branching process,  and more  precisely a  binary  Bellman-Harris branching
process.   See   Athreya~\cite{Athreya77:0,Athreya69:0}.   If  this   setting  complicates
significantly the analysis of  the file sharing system, it seems that  most of the results
obtained in the exponential case should still hold.

\providecommand{\bysame}{\leavevmode\hbox to3em{\hrulefill}\thinspace}
\providecommand{\MR}{\relax\ifhmode\unskip\space\fi MR }
\providecommand{\MRhref}[2]{%
  \href{http://www.ams.org/mathscinet-getitem?mr=#1}{#2}
}
\providecommand{\href}[2]{#2}

\end{document}